\documentclass[a4paper,twocolumn,11pt,accepted=2023-11-17]{quantumarticle}
\pdfoutput=1
\usepackage[utf8]{inputenc}
\usepackage[english]{babel}
\usepackage[T1]{fontenc}

\usepackage{amsmath}
\usepackage{mathtools}
\usepackage{amssymb}
\usepackage{amsthm}
\usepackage{thm-restate}
\usepackage{bbold}
\usepackage{xcolor}
\usepackage{physics}
\usepackage{hyperref}
\usepackage{bm}
\usepackage{ytableau}
\theoremstyle{plain}
\newtheorem{theorem}{Theorem}

\newtheorem{lemma}[theorem]{Lemma}
\newtheorem{corollary}[theorem]{Corollary}
\theoremstyle{definition}
\newtheorem{definition}[theorem]{Definition}

\theoremstyle{remark}
\newtheorem{remark}[theorem]{Remark}

\newcommand{\changevtwo}[1]{#1}
\newcommand{\changevthree}[1]{#1}
\newcommand{\changevfour}[1]{#1}

\usepackage{tikz}
\usetikzlibrary{decorations,decorations.pathmorphing,decorations.markings,calc,snakes,positioning,fixedpointarithmetic,shapes,shapes.geometric,patterns}

\usepackage{etoolbox}
\AtBeginEnvironment{tikzpicture}{\catcode`$=3 } % $ % was important for auctex preview??

%%%%%%%%%%%%%%%%%%%%%%%%%%%%%%% tikz commands for drawing tensor networks %%%%%%%%%%%%%%%%%%%%

% Tensors are represented by either
% 1) Geometric shapes of fixed size. A label can be placed next to the shape in order to distinguish tensors of different shape. Repetedly used shapes are defined as pics in this preamble. Those pics also define anchor points for the index line, in the form of "picname-l", where "l" can be some letter. "l", "r", "t", "b", "c" stand for "left", "right", "top", "bottom", "center". For round shapes, "picname-c" usually defines a node. Watch out to add the "transform shape" key to nodes such that they are rotated.
% 2) A rectangular box node with the label inside. Repetedly used box styles are defined as tikz styles.

\tikzset{alignmid/.style={baseline={([yshift=-.5ex]current bounding box.center)}}} % adjust pictures vertically
\tikzset{every picture/.append style=alignmid}

% shortcuts for line styles
\tikzset{
bottomzigzag/.style={postaction={draw,decorate, decoration={zigzag,amplitude=1pt,segment length=3pt,raise=1pt}}},
zigzag/.style={draw,decorate, decoration={zigzag,amplitude=1pt,segment length=3pt}},
rc/.style=rounded corners,
}

% for giving arguments to a -| line
\tikzset{
    -|/.style={to path={-| (\tikztotarget)}},
    |-/.style={to path={|- (\tikztotarget)}},
}

%%%%%%%%%%%%%%%%%%%%%%%% begin deprecated %%%%%%%%%%%%%%%%%%%%%%%%%%%%%
\usepackage{ifthen}
\usepackage{xparse,pgffor}

\tikzset{
mark/.code={
\tikzset{postaction={/network/mark/.cd,#1,/tikz/.cd,decorate},decoration={name=markings,mark=at position \netmarkpos with{%+\netmarkposoff} with{
\begin{scope}[netmarktrafo]
\netmarkcode
\end{scope}
}}}
\def\netmarkpos{0.5}%\pgfdecoratedpathlength}
},
}

% different markings on lines
\def\netmarkpos{0.5}%\pgfdecoratedpathlength}

\def\netmarkcode{}

\tikzset{
netmarktrafo/.style={},
netmarkstyle/.style={solid,semithick,sharp corners},
}

\pgfkeys{
/network/mark/.cd,
sty/.code={
\tikzset{netmarkstyle/.style={#1}}
},
asty/.code={
\tikzset{netmarkstyle/.append style={#1}}
},
f/.style={asty=fill},
p/.code={ % relative positioning on decorated line
\def\netmarkpos{#1}%\pgfdecoratedpathlength}
},
poff/.code={ % absolute shift on decorated line in cm

},
e/.code={ % mark at the End of line
\def\netmarkpos{\pgfdecoratedpathlength-0.005cm-\netmarkposoff}

\tikzset{netmarktrafo/.append style={shift={(-\netmarkwidth,0)}}}
},
s/.code={ % mark at the Start of line
\def\netmarkpos{0.005cm+\netmarkposoff}

\tikzset{netmarktrafo/.append style={shift={(\netmarkwidth,0)},xscale=-1,yscale=-1}}
},
a/.code={ % mark After end of line
\def\netmarkpos{\pgfdecoratedpathlength-0.005cm}

\tikzset{netmarktrafo/.append style={xscale=-1,shift={(-\netmarkwidth,0)}}}
},
b/.code={ % mark Before start of line
\def\netmarkpos{0.005cm}

\tikzset{netmarktrafo/.append style={xscale=-1,shift={(\netmarkwidth,0),yscale=-1}}}
},
-/.code={ % flip horizontally (mark backwards)
\tikzset{netmarktrafo/.append style={xscale=-1}}
},
r/.code={ % flip vertically (mark on the Right instead of left)
\tikzset{netmarktrafo/.append style={yscale=-1}}
},
sideoff/.code={ % shift sideways
\tikzset{netmarktrafo/.append style={shift={(0,#1)}}}
},
slab/.code={ % label next to the line
\def\netmarkwidth{0}
\def\netmarkcode{
\node[inner sep=0.04cm,netmarkstyle,draw=none] (mylabelwidthtest) at (0,0){\phantom{#1}};
\path let \p1=(mylabelwidthtest.north east), \p2=(mylabelwidthtest.south east), \n1 = {max(abs(\y1),abs(\y2))} in node[inner sep=0.04cm,netmarkstyle] at (0,\n1) {#1};
}
},
lab/.code={ % label on the line
\def\netmarkwidth{0}
\def\netmarkcode{
\node[inner sep=0.04cm,anchor=\netmarkanchor] (mylabelwidthtest) at (0,0) {\phantom{#1}};
\draw[white] (mylabelwidthtest.\pgfdecoratedangle)--(mylabelwidthtest.\pgfdecoratedangle+180);
\node[inner sep=0.04cm,anchor=\netmarkanchor,netmarkstyle] at (0,0) {#1};
}
},
arr/.code={ % arrow
\def\netmarkwidth{0.04}
\def\netmarkcode{\draw[netmarkstyle] (-0.04,0.08)--(0.04,0)--(-0.04,-0.08);}
},
rarr/.code={ % round arrow
\def\netmarkwidth{0.04}
\def\netmarkcode{\draw[netmarkstyle] (-0.04,-0.08)arc(90-180:90:0.08);}
},
dot/.code={ % dot
\def\netmarkwidth{0.08}
\def\netmarkcode{\draw[netmarkstyle] (0,0)circle(0.08);}
},
flag/.code={ % flag to the left
\def\netmarkwidth{0.06}
\def\netmarkcode{\draw[netmarkstyle] (-0.06,0)--(0,0.09)--(0.06,0)--cycle;}
},
darr/.code={ % "double arrow": half-arrow on the left
\def\netmarkwidth{0.08}
\def\netmarkcode{\draw[netmarkstyle] (-0.04,0)--(0.04,0)--(-0.04,0.08)--cycle;}
},
hcirc/.code={ % half-circle on the left
\def\netmarkwidth{0.1}
\def\netmarkcode{\draw[netmarkstyle] (-0.1,0) arc (180:0:0.1);}
},
bar/.code={ % tick, short line perpendicular to line
\def\netmarkwidth{0.05}
\def\netmarkcode{
\draw[netmarkstyle] (0,-0.08cm-0.5*\pgflinewidth)--(0,0.08cm+0.5*\pgflinewidth);
}
},
hbar/.code={ % tick, short perpendicular line towards the left
\def\netmarkwidth{0.05}
\def\netmarkcode{
\draw[netmarkstyle] (0, 0.5*\pgflinewidth)--++(0,0.12);
}
},
mill/.code={
\def\netmarkwidth{0.16}
\def\netmarkcode{
\draw[netmarkstyle] (0,-0.5*\pgflinewidth)--++(-0.08,-0.08)--++(0,0.08);
\draw[netmarkstyle] (0,0.5*\pgflinewidth)--++(0.08,0.08)--++(0,-0.08);
}
},
three dots/.code={
\def\netmarkwidth{0.2}
\def\netmarkcode{
\fill (-0.12,0) circle (0.5*0.05) (0,0) circle (0.5*0.05) (0.12,0) circle (0.5*0.05);
}
},
}

%%%%%%%%%%%%%%%%%%%%%%%%%%%%%%%%%%%% start deprecated %%%%%%%%%%%%%%%%%%%%%%%%%%%%%

%%%%%%%%%%%%%%%%%%%%%%%%%%%%%%%%%%%%%% next-to-shape markers %%%%%%%%%%%%%%%%%%%%%%%%%%%%%%%%%%%%%%%%%%%%%%
% Markers that are put on the index line right next to a tensor, formally part of the tensor shape

% \tikzset{flags/.style={postaction=decorate,decoration={name=markings,mark=at position 0.06cm with{
%     \draw[fill,solid,semithick,sharp corners] (-0.06,0)--(0,#1 0.09)--(0.06,0)--cycle;
%   }}}}
% \tikzset{flage/.style={postaction=decorate,decoration={name=markings,mark=at position \pgfdecoratedpathlength -0.06cm with{
%     \draw[fill,solid,semithick,sharp corners] (-0.06,0)--(0,#1 0.09)--(0.06,0)--cycle;
%   }}}}

% \tikzset{arrs/.style={postaction=decorate,decoration={name=markings,mark=at position 0.04 cm with{
%     \fill[sharp corners] (-#1 0.03,0.07)--(#1 0.05,0)--(-#1 0.03,-0.07);
%   }}}}
% \tikzset{arre/.style={postaction=decorate,decoration={name=markings,mark=at position \pgfdecoratedpathlength -0.04 cm with{
%     \fill[sharp corners] (-#1 0.03,0.07)--(#1 0.05,0)--(-#1 0.03,-0.07);
%   }}}}

%%%%%%%%%%%%%%%%%%%%%%%%%%%%%%%%%%%%% Different box tensor styles %%%%%%%%%%%%%%%%%%%%%%%%%%%%%
% shortcuts for minimum width and miminum height
\tikzset{wid/.style={minimum width=#1cm}}
\tikzset{hei/.style={minimum height=#1cm}}

% shortcuts for shifting anchors on box nodes
\tikzset{sx/.style={xshift=#1cm}}
\tikzset{sy/.style={yshift=#1cm}}

% fat or regular, round or sharp boxes.
\tikzset{box/.style={draw,rectangle}}
\tikzset{fbox/.style={draw,rectangle, line width=1.1}}
\tikzset{roundbox/.style={draw,rectangle,rounded corners}}
\tikzset{froundbox/.style={draw,rectangle, rounded corners, line width=1.1}}
\tikzset{rounddiamond/.style={draw,diamond,rounded corners}}
\tikzset{dot/.style={draw, shape=circle, fill=black, scale=0.5}}

\tikzset{
netbox/.code={
\node[draw,netbdstyle] (\atomname) at (0,0) {#1};
\coordinate (\atomname-r) at (\atomname.east);
\coordinate (\atomname-l) at (\atomname.west);
\coordinate (\atomname-t) at (\atomname.north);
\coordinate (\atomname-b) at (\atomname.south);
\coordinate (\atomname-tr) at (\atomname.north east);
\coordinate (\atomname-br) at (\atomname.south east);
\coordinate (\atomname-tl) at (\atomname.north west);
\coordinate (\atomname-bl) at (\atomname.south west);
},
}

\tikzset{bdlw/.code={\tikzset{mybdstyle/.style={draw, line width=#1}}}}
\tikzset{bdcol/.code={\tikzset{mybdstyle/.append style={#1}}}}

%%%%%%%%%%%%%%%%%%%%%%%%%%%%%%%%%%% end deprecated %%%%%%%%%%%%%%%%%%%%%%%%%%%%%%%%%%%%%%%%

% define new elements
\newcommand\setelements[1]{
\pgfkeys{/network/atom/.cd,#1}
}

% define new elements

% user level command
\newcommand\atoms[2]{
\foreach \name/\keys in {#2}{
\expandafter\atom\expandafter{\keys,#1}{\name}
}
}

\newcommand\atom[2]{
% pre-assignments for the keys
\def\atomname{#2}
\tikzset{
nettrafo/.style={},
netatompos/.style={},
netdeco/.style={},
netpostdeco/.style={},
}

% executing atom keys
\pgfkeys{/network/atom/.cd,#1}

\begin{scope}[netatompos] % shift to atom position
\begin{scope}[nettrafo] % rotate, flip and scale
\netshapecoords % set the anchor coordinates
\fill[netbackstyle] \netshapepath;
\clip \netshapepath;
\tikzset{netdeco}
\draw[netbdstyle] \netshapepath;
\end{scope}
\tikzset{netpostdeco} % draw post-decorations, not rotated, flipped, or scaled
\end{scope}

}

% shapes: define path and coordinates
\pgfkeys{
/network/atom/.cd,
square/.code={

\def\netshapepath{(-\tempsize,-\tempsize)rectangle (\tempsize,\tempsize)}
\def\netshapecoords{
\node[rectangle,wid=2*\tempsize,hei=2*\tempsize,inner sep=0,transform shape](\atomname)at(0,0){};
\coordinate(\atomname-c) at (0,0);
\coordinate(\atomname-r) at (\tempsize,0);
\coordinate(\atomname-l) at (-\tempsize,0);
\coordinate(\atomname-t) at (0,\tempsize);
\coordinate(\atomname-b) at (0,-\tempsize);
\coordinate(\atomname-br) at (\tempsize,-\tempsize);
\coordinate(\atomname-tr) at (\tempsize,\tempsize);
\coordinate(\atomname-bl) at (-\tempsize,-\tempsize);
\coordinate(\atomname-tl) at (-\tempsize,\tempsize);
}},
circ/.code={

\def\netshapepath{(0,0)circle(\tempsize)}
\def\netshapecoords{
\node[circle,wid=2*\tempsize,hei=2*\tempsize,inner sep=0,transform shape](\atomname)at(0,0){};
\coordinate(\atomname-c) at (0,0);
\coordinate(\atomname-r) at (\tempsize,0);
\coordinate(\atomname-l) at (-\tempsize,0);
\coordinate(\atomname-t) at (0,\tempsize);
\coordinate(\atomname-b) at (0,-\tempsize);
}},
triang/.code={

\def\netshapepath{(-30:\tempsize)--(90:\tempsize)--(-150:\tempsize)--cycle}
\def\netshapecoords{
\node[regular polygon,regular polygon sides=3,wid=2*\tempsize,inner sep=0,transform shape](\atomname)at(0,0){};
\coordinate(\atomname-c) at (0,0);
\coordinate(\atomname-cr) at (-30:\tempsize);
\coordinate(\atomname-cl) at (-150:\tempsize);
\coordinate(\atomname-ct) at (90:\tempsize);
\coordinate(\atomname-mb) at (-90:0.5*\tempsize);
\coordinate(\atomname-mr) at (30:0.5*\tempsize);
\coordinate(\atomname-ml) at (150:0.5*\tempsize);
}},
diamond/.code={

\def\netshapepath{(0,-\tempsize)--(\tempsize,0)--(0,\tempsize)--(-\tempsize,0)--cycle}
\def\netshapecoords{
\node[rotate=45,rectangle,wid=sqrt(2)*\tempsize,hei=sqrt(2)*\tempsize,inner sep=0,transform shape](\atomname)at(0,0){};
\coordinate(\atomname-c) at (0,0);
\coordinate(\atomname-r) at (\tempsize,0);
\coordinate(\atomname-l) at (-\tempsize,0);
\coordinate(\atomname-t) at (0,\tempsize);
\coordinate(\atomname-b) at (0,-\tempsize);
}},
pentagon/.code={

\def\netshapepath{(-126:\tempsize)--(-54:\tempsize)--(18:\tempsize)--(90:\tempsize)--(162:\tempsize)--cycle}
\def\netshapecoords{
\node[regular polygon,regular polygon sides=5,wid=2*\tempsize,inner sep=0,transform shape](\atomname)at(0,0){};
\coordinate(\atomname-c) at (0,0);
\coordinate (\atomname-mb)at(-90:{\tempsize*cos(36)});
\coordinate (\atomname-mbr)at(-18:{\tempsize*cos(36)});
\coordinate (\atomname-mtr)at(54:{\tempsize*cos(36)});
\coordinate (\atomname-mtl)at(126:{\tempsize*cos(36)});
\coordinate (\atomname-mbl)at(-162:{\tempsize*cos(36)});
\coordinate (\atomname-cbr)at(-54:\tempsize);
\coordinate (\atomname-cr)at(18:\tempsize);
\coordinate (\atomname-ct)at(90:\tempsize);
\coordinate (\atomname-cl)at(162:\tempsize);
\coordinate (\atomname-cbl)at(-126:\tempsize);
}},
halfcirc/.code={

\def\netshapepath{(\tempsize,0)arc(0:180:\tempsize)--++(0,-0.04)-|cycle}
\def\netshapecoords{
\node[circle,wid=2*\tempsize,hei=2*\tempsize,inner sep=0,transform shape](\atomname)at(0,0){};
\coordinate(\atomname-c) at (0,0);
\coordinate(\atomname-r) at (\tempsize,0);
\coordinate(\atomname-l) at (-\tempsize,0);
\coordinate(\atomname-t) at (0,\tempsize);
\coordinate(\atomname-b) at (0,0);
}},
void/.code={
\def\netshapepath{}
\def\netshapecoords{
\coordinate(\atomname) at (0,0);
\coordinate(\atomname-c) at (0,0);
}},
labbox/.code={
\def\netshapepath{(0,0)}
\def\netshapecoords{}
\tikzset{netpostdeco/.append style={netbox=#1}}
},
}

% transformation keys
\pgfkeys{
/network/atom/.cd,
p/.code={\tikzset{netatompos/.append style={shift={(#1)}}}},
xscale/.code={\tikzset{nettrafo/.append style={xscale=#1}}},
yscale/.code={\tikzset{nettrafo/.append style={yscale=#1}}},
scale/.code={\tikzset{nettrafo/.append style={scale=#1}}},
rot/.code={\tikzset{nettrafo/.append style={rotate=#1}}},
hflip/.style={xscale=-1},
vflip/.style={yscale=-1},
big/.style={scale=3/2},
small/.style={scale=2/3},
huge/.style={scale=9/4},
tiny/.style={scale=4/9},
flat/.style={yscale=2/3},
fflat/.style={yscale=4/9},
}

% setting keys
\tikzset{
netbdstyle/.style={line width=0.15em}, % changed from pt(default)
netdecstyle/.style={},
netpostdecstyle/.style={},
netbackstyle/.style={white},
}
\pgfkeys{
/network/atom/.cd,
bdstyle/.code={\tikzset{netbdstyle/.style={#1}}},
bdastyle/.code={\tikzset{netbdstyle/.append style={#1}}},
decstyle/.code={\tikzset{netdecstyle/.style={#1}}},
postdecstyle/.code={\tikzset{netpostdecstyle/.style={#1}}},
backstyle/.code={\tikzset{netbackstyle/.style={#1}}},
whiteblack/.style={decstyle=white,backstyle=black},
nobd/.code={bdstyle={line width=0}},
style/.style={bdastyle=#1, decstyle=#1, postdecstyle=#1},
}

% user level decorations
\tikzset{
netbscope/.code={\begin{scope}[#1]},
netescope/.code={\end{scope}},
}

\pgfkeys{
/network/atom/.cd,
bscope/.code={\tikzset{netdeco/.append style={netbscope={#1}}}},
escope/.code={\tikzset{netdeco/.append style={netescope}}},
dec/.style 2 args={bscope={#1},#2,escope}, % for applying transformation keys to certain decorations only
vbar/.style={dec={rotate=90}{hbar}},
dcross/.style={dec={rotate=45}{hbar},dec={rotate=-45}{hbar}},
cross/.style={hbar,vbar},
sect6/.style={sect=60},
sect8/.style={sect=45},
sect10/.style={sect=36},
}

% register decorations
\def\regdec#1{\pgfkeys{/network/atom/.cd,#1/.code={\tikzset{netdeco/.append style={net#1}}}}}
\regdec{all}\regdec{rhalf}\regdec{rquart}\regdec{brquart}\regdec{dot}\regdec{spiral}\regdec{swirl}\regdec{hstripe}\regdec{hbar}\regdec{rrey}
\pgfkeys{/network/atom/.cd,sect/.code={\tikzset{netdeco/.append style={netsect=#1}}}}

% low level decorations
\tikzset{
netall/.code={\fill[netdecstyle] (-0.3,-0.3)rectangle (0.3,0.3);}, % fill all
netrhalf/.code={\fill[netdecstyle] (0,-0.3)rectangle (0.3,0.3);}, % right half
netrquart/.code={\fill[netdecstyle] (0.075,-0.3)rectangle (0.3,0.3);}, % right quarter
netbrquart/.code={\fill[netdecstyle] (0,0)rectangle (0.3,-0.3);}, % bottom right quarter
netsect/.code={\fill[netdecstyle] (0,0)--(0,-0.3)arc(-90:-90+#1:0.3)--cycle;}, % section of angle #1 starting from -90
netdot/.code={\fill[netdecstyle] (0,0)circle(0.07);}, % dot in the middle
netspiral/.code={\draw[netdecstyle] plot [variable=\t,domain=0:4] ({0.075*\t*cos(pi*(\t-0.5) r)},{0.075*\t*sin(pi*(\t-0.5) r)});}, % spiral
netswirl/.code={\fill[netdecstyle] plot [variable=\t,domain=0:2] ({0.15*\t*cos(pi*(\t-0.5) r)},{0.15*\t*sin(pi*(\t-0.5) r)}) arc(-90:-450:0.3)--cycle;}, % filled swirl
nethstripe/.code={\fill[netdecstyle] (-0.3,-0.05)rectangle(0.3,0.05);}, % horizontal stripe
nethbar/.code={\draw[netdecstyle] (-0.3,0)--(0.3,0);}, % horizontal line
netrrey/.code={\draw[netdecstyle] (0,0)--(0.3,0);} % line from the middle to the right
}

% user level after-decorations
\pgfkeys{
/network/atom/.cd,
postbscope/.code={\tikzset{netpostdeco/.append style={netbscope={#1}}}},
postescope/.code={\tikzset{netpostdeco/.append style={netescope}}},
postdec/.style 2 args={postbscope={#1},#2,postescope}, % for applying transformation keys to certain decorations only
arc/.code={\tikzset{netpostdeco/.append style={netarc=#1}}},
lab/.code={\tikzset{netpostdeco/.append style={netlab={#1}}}},
shadecirc/.code={\tikzset{netpostdeco/.append style=netshadecirc}},
shaderect/.code={\tikzset{netpostdeco/.append style={netshaderect={#1}}}},
debug/.code={\tikzset{netpostdeco/.append style=netdebug}},
markline/.code 2 args={\tikzset{netpostdeco/.append style={netmarkline={#1}{#2}}}},
postcirc/.code={\tikzset{netpostdeco/.append style=netpostcirc}},
}

% low level after-decorations
\tikzset{
netlab/.code={
\pgfkeys{/network/atom/lab/.cd,#1}
\node[netpostdecstyle] at (\ifdefined\netlabpos\netlabpos\else\netlabang:\netlabdist\fi) {\netlabwrap{\netlabtext}};
},
netarc/.code args={#1:#2:#3}{
\draw[netpostdecstyle] (#1:#3) arc (#1:#2:#3);
},
netshadecirc/.code= {
\fill[opacity=0.4,netpostdecstyle] (0,0)circle(0.4);
},
netpostcirc/.code= {
\draw[netpostdecstyle] (0,0)circle(0.15);
},
netshaderect/.code= {
\fill[rc,opacity=0.4,netpostdecstyle] ($-1*(#1)$) rectangle (#1);
},
netdebug/.code= {
\node[red] at (0,0){\atomname};
},
netmarkline/.code 2 args= {
\draw (\atomname)edge[mark={#2}]++(#1);
%\draw (0,0)edge[mark={#2}]++(#1);
},
}

% labels
\def\netlabwrap#1{#1}
\pgfkeys{
/network/atom/lab/.cd,
t/.code={\def\netlabtext{#1}}, % label text
p/.code={\def\netlabpos{#1}}, % position of the label
ang/.code={\def\netlabang{#1}},
dist/.code={\def\netlabdist{#1}},
wrap/.code={\def\netlabwrap##1{#1}}, % store in doesn't work
}

\usetikzlibrary{decorations,decorations.pathmorphing}

% \pgfdeclaredecoration{raise open}{start}{
%     \state{start}[width=+0pt,next state=iterate,persistent precomputation=\pgfdecoratepathhascornerstrue]
%         {\pgfpathmoveto{\pgfpoint{0pt}{\pgfdecorationsegmentamplitude}}}
%     \state{iterate}[width=+\pgfdecorationsegmentlength]{\pgfpathlineto{\pgfpoint{0pt}{\pgfdecorationsegmentamplitude}}}
%     \state{final}{}
% }

\tikzset{
ind/.style={mark={lab=#1,a}},
nqubit/.style={line width=3},
triple/.style={preaction={preaction={draw,line width=6},draw,line width=5.2,white}},
%triple/.style={postaction={draw,decorate,decoration={raise open,amplitude=0.1cm,segment length=1pt}},preaction={draw,decorate,decoration={raise open,amplitude=-0.1cm,segment length=1pt}}},
irrep/.style={line width=1.5},
qind/.style={thick, densely dotted},
pind/.style={very thick,postaction={sharp corners,draw,decorate,decoration={ticks,segment length=0.08cm,amplitude=0.05cm,pre=curveto,pre length=0.07cm,post=curveto,post length=0.03cm}}},
fusion/.style={draw=none,postaction={sharp corners,draw,decorate,decoration={zigzag,amplitude=0.04cm,segment length=0.08cm}}},
mps/.style={},
smallmultip/.style={postaction={sharp corners,draw,decorate,decoration={ticks,segment length=0.08cm,amplitude=0.03cm,pre=curveto,pre length=0.07cm,post=curveto,post length=0.03cm}}},
}

\setelements{
delta/.style={circ,tiny,all},
}

\def\gse{\operatorname{GSE}}

\begin{document}
\title{Efficient classical algorithms for simulating symmetric quantum systems}
\author{Eric R.\ Anschuetz}
\email{eans@mit.edu}
\affiliation{MIT Center for Theoretical Physics, 77 Massachusetts Avenue, Cambridge, MA 02139, USA}
\author{Andreas Bauer}
\email{andibauer@zedat.fu-berlin.de}
\affiliation{Dahlem Centre for Complex Quantum Systems, Freie Universit{\"a}t Berlin, Arnimallee 14, 14195 Berlin, Germany}
\author{Bobak T.\ Kiani}
\email{bkiani@mit.edu}
\affiliation{MIT Department of Electrical Engineering and Computer Science, 77 Massachusetts Avenue, Cambridge, MA 02139, USA}
\author{Seth Lloyd}
\email{slloyd@mit.edu}
\affiliation{MIT Department of Mechanical Engineering, 77 Massachusetts Avenue, Cambridge, MA 02139, USA }
\affiliation{Turing Inc., Cambridge, MA 02139, USA}
% \preprint{MIT-CTP/5500}

\begin{abstract}
    In light of recently proposed quantum algorithms that incorporate symmetries in the hope of quantum advantage, we show that with symmetries that are restrictive enough, classical algorithms can efficiently emulate their quantum counterparts given certain classical descriptions of the input.
    Specifically, we give classical algorithms that calculate ground states and time-evolved expectation values for permutation-invariant Hamiltonians specified in the symmetrized Pauli basis with runtimes polynomial in the system size.
    We use tensor-network methods to transform symmetry-equivariant operators to the block-diagonal Schur basis that is of polynomial size, and then perform exact matrix multiplication or diagonalization in this basis. 
    These methods are adaptable to a wide range of input and output states including those prescribed in the Schur basis, as matrix product states, or as arbitrary quantum states when given the power to apply low depth circuits and single qubit measurements.
\end{abstract}

\maketitle

\section{Introduction}

In the physical sciences, symmetries are useful for simplifying difficult computational tasks by reducing the effective degrees of freedom of the problem. This general principle has been used to find exact solutions to many problems, such as integrable systems~\cite{bethe1931theorie}, topological fixed-point models~\cite{Levin2004}, or conformal field theories~\cite{Belavin1984}. There has been a hope that similar symmetries may enable the efficiency of quantum algorithms for simulating or finding the ground state of a symmetric Hamiltonian. Indeed, it is known that there exist theoretical guarantees for quantum algorithms for finding the ground \changevthree{state~\cite{schatzki2022theoretical} and fast-forwarding quantum dynamics~\cite{Gu2021fastforwarding} of Hamiltonians which commute under the action of the symmetric group $\mathrm{S}_n$ on qubits}. It has also numerically been shown that quantum algorithms are capable of finding the ground state of certain integrable systems~\cite{wiersema2020exploring,anschuetz2022critical} even when the symmetry is not explicitly given to the quantum algorithm \textit{a priori}. \changevtwo{Furthermore, prior work used Lie algebraic methods to efficiently classically simulate operators restricted to a Lie algebra whose dimension is polynomially large (independent of the potentially exponentially large Hilbert space dimension) \cite{somma2006efficient,zeier2011symmetry}.} Quantum machine learning models that are symmetry equivariant are also believed to be more efficiently trainable than their general counterparts~\cite{symmetrywu,anschuetzkiani2022,castelazo2022quantum,equivariant,PRXQuantum.3.030341,ragone2022representation}. These quantum models are partly inspired by classical neural network models that have enjoyed much recent success \cite{bronstein2017geometric,wu2020comprehensive,cohen2016group}. However, restricting quantum algorithms to problems obeying many symmetries potentially allows for efficient classical algorithms which also take advantage of these same symmetries. This raises the natural question: are there efficient classical algorithms capable of performing these tasks?

This is what we investigate here. Intuitively, we show that problems constrained by large symmetry groups yield efficient classical algorithms for computing many properties of interest, as illustrated in Figure~\ref{fig:illustrative_stuff}(a). We first broadly discuss very general classical algorithms for finding the ground state and energy of Hamiltonians constrained by many symmetries. We also consider the problem of simulating dynamics under symmetric Hamiltonians. We then specialize to the case of systems invariant under permutations of its qu\changevthree{b}its. Finally, we dequantize an algorithm for performing binary classification problems using permutation-invariant systems on qubits.

\begin{figure*}[t!]
    \begin{center}
        \includegraphics[]{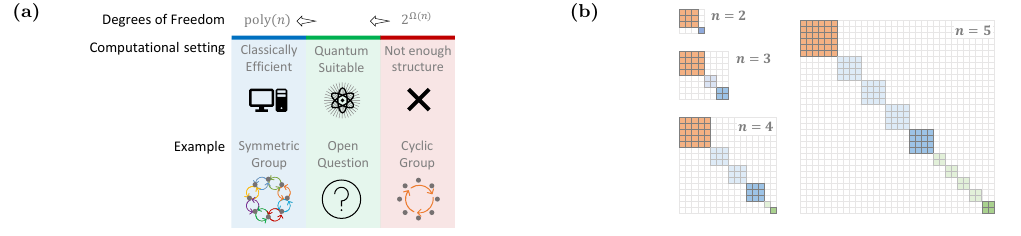}
        \caption{(a) Small groups of symmetry leave too large of an effective dimension for the problem to be tractable via quantum computation. On the contrary, very restrictive symmetries render a problem classically tractable. Between these two regions lies an area of promise where quantum computers may offer an advantage. (b) The Schur--Weyl decomposition shows that only a smaller representative subspace (indicated by darker colors) of the larger Hilbert space needs to be considered for permutation invariant operations. The size of this subspace grows as $\operatorname{O}(n^3)$ for $n$ qubits.\label{fig:illustrative_stuff}}
    \end{center}
\end{figure*}

\section{Motivation and setting}

Our algorithms are motivated by the fact that symmetries significantly reduce the number of degrees of freedom for a given problem. For example, consider the classical setting of Boolean functions which are invariant under arbitrary permutations of the bits. Such functions are defined up to the orbits of the Boolean cube with respect to permutations of the bits. For a Boolean function on $n$ bits, there are $n+1$ orbits indexed by the Hamming weight of the bitstrings. Therefore, any problem over symmetric Boolean functions need only consider a given element of each of the $n+1$ orbits to cover all possible degrees of freedom. As we will later show, the symmetric group acting over $n$ qubits similarly reduces systems to $\operatorname{O}(n^3)$ degrees of freedom. By considering the algebra of the symmetric group on the symmetric subspace of linear operators, we will show that all these degrees of freedom can be manipulated solely through classical computation.

Before proceeding, we need to introduce important functions and definitions that will be used in this setting. We first formalize the notion of symmetry by speaking of \emph{invariant operators}, defined in the following way:
\begin{definition}[Invariant operator]
Given a compact group $G$ with unitary representation $R:G \to U(N)$, a linear operator $H:\mathbb{C}^{N} \to \mathbb{C}^N$ is invariant under $R\left(G\right)$ if
\begin{equation}
    R(g) H R(g)^\dagger = H, \; \; \; \forall g \in G.
\end{equation}
Note that any invariant operator is also an \emph{equivariant operator}~\cite{equivariant} in the sense that it commutes with the representation of the group.
\label{def:inv_op}
\end{definition}

Any operator can be projected onto the symmetric subspace induced by $R\left(G\right)$ using the twirling superoperator $\operatorname{Re}_R$ (more commonly known as the Reynold's operator in invariant theory) \cite{olver1999classical,sturmfels2008algorithms}, which maps any operator onto the set of equivariant operators:
\begin{equation}
\label{eq:reynolds_definition}
    \operatorname{Re}_R(M) = \frac{1}{|G|} \sum_{g \in G} R(g) M R(g)^\dagger.
\end{equation}

Invariant subspaces of a larger Hilbert space can be identified by performing an isotypic decomposition of the representation of a group. As an example, in the case of systems invariant under permutations of the qudits, the Schur decomposition maps the computational basis into blocks of invariant subspaces. We graphically visualize this phenomenon in Figure~\ref{fig:illustrative_stuff}(b) and provide further details in Appendix~\ref{app:schur_basis}.

Throughout this study, runtime complexities are denoted as a function of the matrix multiplication exponent $\omega$. The best known upper bound is currently $\omega=2.37188$ \cite{duan2022faster}, which implies asymptotic runtimes of $\operatorname{O}(n^{\omega + \alpha})$ for any $\alpha>0$ for stably performing common linear algebraic routines including eigendecomposition, SVD, and matrix inversion \cite{demmel2007fast}.

\section{Background for general symmetry groups}
\label{sec:general_groups}

In this Section, we discuss the general problem of finding the ground state energy, ground state, and performing time evolution under a Hamiltonian $H$ on a finite-dimensional Hilbert space which is invariant under some representation $R$ of a symmetry group $G$. Some of these problems have previously been considered in a variety of special cases\changevfour{~\cite{PhysRevA.65.032325,somma2006efficient,zeier2011symmetry,shammah2018open,lowfermions2022}}. Here, we provide background on the general techniques used in solving these problems using a unified framework. \changevfour{We also give an equivalent formulation in Appendix~\ref{sec:general_tn_appendix} in the language of tensor networks.}

We consider the *-subalgebra of operators invariant under $R$ to which $H$ belongs. We think of this subalgebra as a standalone *-algebra $X$, such that the embedding of $X$ into the full operator algebra defines a representation $A$ of $X$. The practical relevance of these considerations is when the size of the total Hilbert space grows exponentially with some scaling parameter $n$. The paradigmatic example is the Hilbert space of $n$ qubits. If there are enough symmetries, it can happen that the dimension $N(n)$ of $X$ only grows polynomially with $n$, in which case many properties can be calculated efficiently \cite{somma2006efficient}. This restriction of $X$ to a lower-dimensional subspace may more generally happen beyond systems symmetric in the sense of Definition~\ref{def:inv_op}. Due to this, for now we focus explicitly on $X$ and $A$, rather than on $G$ and its representation $R$; we will discuss the connection of our results to $G$ and $R$ more specifically at the end of this Section. 

For the various algorithms we now consider, we will assume that different properties of $X$ and $A$ are known. In the course of proving Lemma~\ref{lem:general_gs}, we give an algorithm for finding the ground state of $H$. Every finite-dimensional *-algebra is isomorphic to a direct sum of irreducible blocks, and every representation is isomorphic to a direct sum of irreducible representations. That is, there is a block-diagonal orthonormal basis $\ket{\lambda, q_\lambda, p_\lambda}$ of the vector space acted upon by $A$, where $\lambda$ labels an irrep of $X$, $q_\lambda$ labels a basis vector internal to $\lambda$, and $p_\lambda$ labels a basis vector in the multiplicity vector space of $\lambda$; this is the basis in which we compute the ground state of $H$ (for some arbitrary and fixed multiplicity label $p_{\lambda 0}$). To prove our theorem, we assume knowledge of the matrix elements:
\begin{equation}
F^{i,\lambda}_{q_\lambda,q_\lambda'} \coloneqq \bra{\lambda, q_\lambda, p_{\lambda 0}} A_i \ket{\lambda, q_\lambda', p_{\lambda 0}}.
\label{eq:mat_el_def}
\end{equation}

In the following lemmas, we assume we are given the Hamiltonian $H\in A(X)$ as $h\in X$ expressed in the preferred basis, such that
\begin{equation}
\label{eq:hamiltonian_basis}
H=\sum_i h_i A_i\;.
\end{equation}
\changevfour{Though a given $H$ may not be classically described in this exact basis, transformations into this basis are typically efficient when the number of irreps and irrep dimensions are polynomially sized. For example, a Pauli decomposition can be used to transform operators into the Pauli basis studied later in the permutation invariant setting.} We first focus on the case when we are interested in finding the ground state of some representation of such a Hamiltonian, in a basis where the action of the representation is known.
\begin{lemma}[Finding the ground state of symmetric Hamiltonians]
Consider a subalgebra $A(X)$, and assume that the matrix elements $F^{i,\lambda}_{q_\lambda,q_\lambda'}$ are known as discussed above. Then the ground state energy and ground state of $H$ in the $\ket{\lambda,q_\lambda,p_{\lambda 0}}$ basis can be found in time
$\operatorname{O}\left(n_\lambda^2 n_q^4\right)$, where $n_\lambda$ are the number of irreps of $X$ and $n_q$ the maximum irrep dimension.
\label{lem:general_gs}
\end{lemma}
\begin{proof}
For each $\lambda$, we compute the operator with indices:
\begin{equation}
\label{eq:hamiltonian_reduced_basis}
\hat h_{q_\lambda,q_\lambda'}^\lambda \coloneqq \sum_i h_i F_{q_\lambda,q_\lambda'}^{i,\lambda}.
\end{equation}
This takes time $\operatorname O\left(\dim\left(X\right)^2\right) = \operatorname{O}\left(n_\lambda^2 n_q^4\right)$.
Note that in the $\ket{\lambda,q_\lambda,p_\lambda}$ basis, $H$ has a block diagonal form. Furthermore, as $p_\lambda$ labels isomorphic copies of irreps, we can find the ground state by fixing $p_{\lambda 0}$ WLOG. Namely, the ground state energy is given by:
\begin{equation}
    \gse(H) = \min_\lambda \gse(\hat h^\lambda)\;,
\end{equation}
where:
\begin{equation}
    \gse(\hat h^\lambda)\coloneqq\min_{\ket{\psi}}\bra{\psi}\hat h^\lambda\ket{\psi}.
\end{equation}
Furthermore, let
\begin{equation}
\lambda_{\text{min}} \coloneqq \operatorname{argmin}_\lambda \gse(\hat h^\lambda)
\end{equation}
and
\begin{equation}
\ket{\psi^\ast} \coloneqq \operatorname{argmin}_{\ket{\psi}}\bra{\psi}\hat h^{\lambda_\text{min}}\ket{\psi}\;.
\end{equation}
Then, for any $p$,
\begin{equation}
\ket{\lambda_\text{min},\psi^\ast,p}
\end{equation}
is a ground state in the $\ket{\lambda,q_\lambda,p_\lambda}$ basis. The dimension of $\hat h^\lambda$ is $\dim_X(\lambda)\times\dim_X(\lambda)$, and thus and thus diagonalizing it takes time $\operatorname{O}\left(n_q^\omega\right)$. As we need to do this for $\operatorname{O}(n_\lambda)$ many values of $\lambda$, the total runtime for finding $\ket{\psi^*}$ is $\operatorname{O}\left(n_\lambda n_q^\omega\right)$.
So the runtime of the computation in Eq.~\eqref{eq:hamiltonian_reduced_basis} dominates, and the overall runtime is $\operatorname{O}\left(n_\lambda^2 n_q^4\right)$.
\end{proof}

We now show that the dynamics of an initial state $\rho$ under equivariant unitaries can be classically simulated even if $\rho\neq A\left(X\right)$. This generalizes similar \changevfour{approaches taken in \cite{PhysRevA.65.032325} and \cite{lowfermions2022} for both free fermions and fermionic systems with particle number symmetry}. The Lemma as written assumes the initial state is given in a basis where the multiplicity indices have been traced out; in Section~\ref{sec:permutation_invariance} we will give concrete examples on how one can compute this efficiently under a variety of input models.
\begin{lemma}[Simulating equivariant dynamics]
Let
\begin{equation}
    O=\sum\limits_i o_i A_i
\end{equation}
be a projective measurement and
\begin{equation}
    U=\sum\limits_i u_i A_i
\end{equation}
a unitary operator. Let $\rho$ be a quantum state, and let $\rho_{(\lambda, q_\lambda, p_\lambda), (\lambda', q_\lambda', p_\lambda')}$ be the coefficients of the state in the $\ket{\lambda,q_\lambda,p_\lambda}$ basis.
Assume we are given the reduced density matrix
\begin{equation}
\label{eq:rho_tilde}
    \tilde\rho^\lambda_{q_0, q_1}\coloneqq \sum_{p_\lambda} \rho_{(\lambda, q_\lambda, p_\lambda), (\lambda, q_\lambda', p_\lambda)}\;.
\end{equation}
Then,
\begin{equation}
    \ell\left(\rho\right)=\tr\left(OU\rho U^\dagger\right)
    \label{eq:single_loss}
\end{equation}
can be computed in time $\operatorname{O}\left(n_\lambda^2 n_q^4\right)$.
\label{lem:sym_dyn}
\end{lemma}
\begin{proof}
After going to the $\ket{\lambda,q_\lambda,p_\lambda}$ basis, $O$ and $U$ become block diagonal, so we can compute $\tr\left(OU\rho U^\dagger\right)$ by restricting to fixed $p_{\lambda0}$.
To this end, we compute
\begin{equation}\label{eq:tilde_o}
\tilde o_{q_\lambda,q_\lambda'}^\lambda \coloneqq \sum_i o_i F_{q_\lambda,q_\lambda'}^{i,\lambda}.
\end{equation}
and
\begin{equation}\label{eq:tilde_u}
\tilde u_{q_\lambda,q_\lambda'}^\lambda \coloneqq \sum_i u_i F_{q_\lambda,q_\lambda'}^{i,\lambda}.
\end{equation}
Given the matrix elements $F_{q_\lambda,q_\lambda'}^{i,\lambda}$, this takes time $\operatorname O\left(\dim\left(X\right)^2\right) = \operatorname{O}\left(n_\lambda^2 n_q^4\right)$.
Then we have
\begin{equation}
\ell(\rho) = \tr\left(OU\rho U^\dagger\right) = \sum_\lambda \tr\left(\tilde o^\lambda \tilde u^\lambda \tilde \rho^\lambda (\tilde u^\lambda)^\dagger\right)\;.
\end{equation}
The right-hand side can be computed in time $\operatorname{O}\left(n_\lambda n_q^\omega\right)$, as it is given by the matrix multiplication of $n_\lambda$ blocks each of size at most $n_q\times n_q$.
\end{proof}
In the above considerations, the group $G$ and representation $R$ do not directly enter. In practice, however, we might want to start with those two. The irreps of $X$ are in one-to-one correspondence with those of $R$. By a simple corollary of the von Neumann bicommutant theorem, multiplicities of irreps in $X$ are the dimensions of the irreps of $G$, and the dimensions of the irreps of $G$ are the multiplicities of the irreps of $A$. We thus have that:
\begin{equation}
    \dim(X) = \sum_{\lambda} \dim_X(\lambda)^2 = \sum_{\lambda} \operatorname{mult}_R(\lambda)^2\;.
\end{equation}
Thus, the problems discussed above become classically tractable if the number $n_\lambda$ of irreps of $G$ with non-zero multiplicity in $R$, as well as the maximum multiplicity $n_q$ of an irrep $\lambda$ in $R$, are both polynomially small.

\section{Permutation invariance on qubits}
\label{sec:permutation_invariance}
We will now apply the general algorithms of Section~\ref{sec:general_groups} to the case where $G$ is given by $\mathrm{S}_n$ and $R$ is the representation on $n$ qubits acting by permutations,
\begin{equation}
\label{eq:permutation_representation}
\begin{multlined}
    R(\pi) \ket{i_1} \otimes \ket{i_2} \otimes \cdots \otimes \ket{i_n}\\
    = \ket{ i_{\pi^{-1}1}} \otimes \ket{ i_{\pi^{-1}2}} \otimes \cdots \otimes \ket{ i_{\pi^{-1}1}}\;.
\end{multlined}
\end{equation}
A straightforward basis for the algebra of invariant operators can be obtained by applying the Reynold's operator in Eq.~\eqref{eq:reynolds_definition} to the Pauli basis. Normalizing such that all operators $A_i$ are sums of unit norm Pauli terms, we obtain
\begin{equation}
    \begin{aligned}
        &A_{\bm i}=\frac{1}{i_1!i_x!i_y!i_z!}\cdot\\
        &\sum\limits_{\pi\in \mathrm{S}_n}R\left(\pi\right)\left(\sigma_1^{\otimes i_1} \otimes \sigma_x^{\otimes i_x} \otimes \sigma_y^{\otimes i_y} \otimes \sigma_z^{\otimes i_z}\right)R^{-1}\left(\pi\right),
    \end{aligned}
\end{equation}
for every $4$-tuple of positive integers
\begin{equation}
\bm{i}=(i_1, i_x, i_y, i_z): i_1+i_x+i_y+i_z=n\;,
\end{equation}
where $\sigma$ denote Pauli operators, and $\sigma_1=\operatorname{id}_2$.

Enumerating such 4-tuples shows that the dimension of the algebra $X$ is of order $\operatorname{O}(n^3)$. Therefore, the algorithms of Section~\ref{sec:general_groups} would reduce the task of finding the ground state (energy) of a Hamiltonian to a polynomial runtime in $n$, compared to exact diagonalization of the full Hamiltonian with a runtime exponential in $n$. Let us start with computing the ground state energy and ground state of a permutation-invariant Hamiltonian $H$ given as a vector $h_i$ in the basis of symmetrized Pauli monomials.
\begin{theorem}[Efficient classical computation of the ground state of a permutation-invariant Hamiltonian]
\label{thm:sym_gs}
The ground state and ground state energy of a permutation-invariant Hamiltonian on $n$ qubits, given as $h_i$ in the basis of symmetrized Pauli monomials above, can be computed in time $\operatorname O(n^7)$ via Lemma~\ref{lem:general_gs}. When applied to a single Hamiltonian for each system size $n$, the bottleneck of the algorithm is the computation of the matrix elements $F^{i,\lambda}_{q_\lambda, q_\lambda'}$ that takes $\operatorname{O}(n^7)$, but is independent on $h_i$. Once the matrix elements are computed, application of Lemma~\ref{lem:general_gs} only takes time $\operatorname{O}\left(n^6\right)$. The output ground state is given in the $\ket{\lambda, q_\lambda, p_{\lambda0}}$ Schur basis for a fixed $p_{\lambda0}$.
\end{theorem}
\begin{proof}
The block-diagonal basis of the permutation representation $R$ of $G=S_n$ is known as \emph{Schur basis}, which we will review in Appendix~\ref{app:schur_basis}.
From this it is easy to see that the number $n_\lambda$ of irreps with nonzero multiplicity is $n_\lambda = \operatorname{O}\left(n\right)$, and the maximal dimension $n_q$ of these irreps is $n_q=\operatorname{O}\left(n\right)$. Direct application of the algorithm in Lemma~\ref{lem:general_gs}, then leads to a runtime of $\operatorname{O}\left(n^6\right)$ for computing the ground state of $\mathrm{S}_n$-equivariant Hamiltonians in the Schur basis.
All that remains is the computation of the $\operatorname{O}(n^6)$ many matrix elements $F^{i,\lambda}_{q_\lambda, q_\lambda'}$ for a given $n$. To this end, we make use of the fact that the Schur basis $\ket{\lambda, q_\lambda, p_{\lambda0}}$ as well as the representation $A$ both have \emph{matrix product operator} (MPO) representations. We will here only give a sketch of the tensor-network computation, and refer to Appendix~\ref{sec:tensor_network_f_entries} for the details. The representations are of the form
\begin{equation}
\label{eq:schur_mpo_representations}
\langle\lambda, q_\lambda, \vec p_\lambda|\vec l\rangle=
\begin{tikzpicture}
\atoms{square}{{a0/p={0,0}}, {a1/p={1.2,0}}, {a2/p={3,0}}}
\draw (a0)--(a1) (a1)edge[mark={three dots,a}]++(0:0.4) (a2)edge[mark={three dots,a}]++(180:0.4);
\draw (a0-t)edge[ind=$l_0$]++(90:0.4) (a1-t)edge[ind=$l_1$]++(90:0.4) (a2-t)edge[ind=$l_{n-1}$]++(90:0.4);
\draw (a0-b)edge[ind=$(p_\lambda)_0$]++(-90:0.4) (a1-b)edge[ind=$(p_\lambda)_1$]++(-90:0.4) (a2-b)edge[ind=$(p_\lambda)_{n-1}$]++(-90:0.4);
\draw (a2-r)edge[ind=$q_\lambda$]++(0:0.4);
\end{tikzpicture}\;,
\end{equation}
of bond dimension $\operatorname O(n)$, for each fixed value of $\lambda$, and 
\begin{equation}
\label{eq:rep_mpo_representation}
\langle\vec l|A_{\mathbf i}|\vec l'\rangle=
\begin{tikzpicture}
\atoms{circ}{{a0/p={0,0}}, {a1/p={1.2,0}}, {a2/p={3,0}}}
\draw (a0)--(a1) (a1)edge[mark={three dots,a}]++(0:0.4) (a2)edge[mark={three dots,a}]++(180:0.4);
\draw (a0-t)edge[ind=$l_0'$]++(90:0.4) (a1-t)edge[ind=$l_1'$]++(90:0.4) (a2-t)edge[ind=$l_{n-1}'$]++(90:0.4);
\draw (a0-b)edge[ind=$l_0$]++(-90:0.4) (a1-b)edge[ind=$l_1$]++(-90:0.4) (a2-b)edge[ind=$l_{n-1}$]++(-90:0.4);
\draw (a2-r)edge[ind=$\mathbf i$]++(0:0.4);
\end{tikzpicture}
\end{equation}
of bond dimension $\operatorname O(n^3)$.
With this, the overlap in Eq.~\eqref{eq:mat_el_def} in the computation of the matrix elements of $F$ can be expressed as a tensor-network diagram:
\begin{equation}
\label{eq:fmatrix_computation}
\begin{gathered}
F^{i,\lambda}_{q_\lambda, q_\lambda'}
= \bra{\lambda, q_\lambda, p_{\lambda 0}} A_i \ket{\lambda, q_\lambda', p_{\lambda 0}}
\\
=
\begin{tikzpicture}
\atoms{circ}{{a0/p={0,0}}, {a1/p={1.2,0}}, {a2/p={3,0}}}
\atoms{square}{{b0/p={0,0.6}}, {b1/p={1.2,0.6}}, {b2/p={3,0.6}}}
\atoms{square}{{c0/p={0,-0.6}}, {c1/p={1.2,-0.6}}, {c2/p={3,-0.6}}}
\atoms{circ, tiny}{{x0/p={0,1.1}, lab={p=90:0.3, t=$(p_{\lambda0})_0$}}, {x1/p={1.2,1.1}, lab={p=90:0.3, t=$(p_{\lambda0})_1$}}, {x2/p={3,1.1}, lab={p=90:0.3, t=$(p_{\lambda0})_{n-1}$}}}
\atoms{circ, tiny}{{y0/p={0,-1.1}, lab={p=-90:0.25, t=$(p_{\lambda0})_0$}}, {y1/p={1.2,-1.1}, lab={p=-90:0.25, t=$(p_{\lambda0})_1$}}, {y2/p={3,-1.1}, lab={p=-90:0.25, t=$(p_{\lambda0})_{n-1}$}}}
\draw (a0)--(a1) (a1)edge[mark={three dots,a}]++(0:0.4) (a2)edge[mark={three dots,a}]++(180:0.4);
\draw (b0)--(b1) (b1)edge[mark={three dots,a}]++(0:0.4) (b2)edge[mark={three dots,a}]++(180:0.4);
\draw (c0)--(c1) (c1)edge[mark={three dots,a}]++(0:0.4) (c2)edge[mark={three dots,a}]++(180:0.4);
\draw (a0)--(b0) (a0)--(c0) (b0)--(x0) (c0)--(y0);
\draw (a1)--(b1) (a1)--(c1) (b1)--(x1) (c1)--(y1);
\draw (a2)--(b2) (a2)--(c2) (b2)--(x2) (c2)--(y2);
\draw (b2-r)edge[ind=$q_\lambda$]++(0:0.4) (c2-r)edge[ind=$q_\lambda'$]++(0:0.4) (a2-r)edge[ind=$i$]++(0:0.4);
\end{tikzpicture}\;.
\end{gathered}
\end{equation}
Contracting this tensor network from the left to the right (for a fixed $\lambda$ and $\vec p_{\lambda0}$) is the same as a sequence of $n-1$ vector-matrix multiplications of dimension $\operatorname O(n^5)$. This naive contraction already yields a polynomial runtime, namely $\operatorname O(n^{10})$ for each of $\operatorname O(n)$ contractions for each of $\operatorname O(n)$ values of $\lambda$, yielding $\operatorname O(n^{12})$ in total. However, in Appendix~\ref{sec:tensor_network_f_entries}, we make use of a band-diagonal structure of the matrices to reduce the cost of a single contraction from $\operatorname{O}(n^{10})$ to $\operatorname{O}(n^5)$, yielding $\operatorname{O}(n^7)$ in total.
In Appendix~\ref{sec:combinatorial_f}, we also give an alternative combinatorial way of calculating the matrix entries of $F$ which has a slightly worse runtime of $\operatorname{O}(n^{10})$ in total but is better parallelizable.
\end{proof}

\begin{remark}
The output of the classical algorithm in Theorem~\ref{thm:sym_gs} is the ground state in the Schur basis, which might not always be a useful description. However, from this output, we can be efficiently construct the ground state on $n$ qubits with a quantum computer via the Schur transform~\cite{bacon2006efficient, harrowschur,kirby2017practical}. Furthermore, we can construct an efficient classical representation of a ground state as a low-bond-dimension MPS as we discuss in Section~\ref{sec:mps_input_output} and Appendix~\ref{sec:time_evo_appendix}.
\end{remark}

We now consider an application of Lemma~\ref{lem:sym_dyn} to the symmetric group case.
\begin{theorem}[Efficient classical simulation of permutation-equivariant dynamics]
Consider the classical evaluation of $\ell\left(\rho\right)$ as in Eq.~\eqref{eq:single_loss} using the same assumptions as Lemma~\ref{lem:sym_dyn} with symmetry group $\mathrm{S}_n$. $\ell\left(\rho\right)$ can be calculated in time $\operatorname{O}\left(n^7\right)$.
\label{thm:perm_inv_dynamics}
\end{theorem}
\begin{proof}
As for the proof of Theorem~\ref{thm:sym_gs}, we can compute the matrix elements $F^{i,\lambda}_{q_\lambda, q_\lambda'}$ in time $\operatorname{O}(n^7)$.
The remainder is straight-forward application of Lemma~\ref{lem:sym_dyn} with $n_\lambda,n_q=\operatorname{O}(n)$.
\end{proof}

Theorems~\ref{thm:sym_gs} and~\ref{thm:perm_inv_dynamics} provide efficient end-to-end classical algorithms if the output ground state, or the input state (for computing the dynamics) are given in the Schur basis. Prior work has also shown similar results for Dicke states \cite{gegg2016efficient,shammah2018open}.
It may be desirable and more natural to process inputs and outputs of the algorithms that do not explicitly depend on the Schur basis.
In the following two subsections, we will discuss the algorithms for two different input and output formats, namely first as a quantum state, and second as a matrix product state. These results are summarized in Table \ref{tab:summary_table}

\begin{table*}[t]
\centering
\footnotesize
\begin{tabular}{lccl} \hline
\textbf{Task} & \textbf{Runtime} & \textbf{Ref.} & \textbf{Notes} \\ \hline
\multicolumn{4}{l}{\textit{Finding ground state:}} \\
$\; \; \; $ Output specified in Schur basis   &  $\operatorname{O}(n^7)$ & Thm. \ref{thm:sym_gs} & $\operatorname{O}(n^6)$ if $F$ matrix is precomputed.  \\ 
$\; \; \; $ Output matrix product state$^*$          & $\operatorname{O}(n^7)$ & Rem. \ref{rem:n6_tensor} & $\operatorname{O}(n^6)$ if $F$ matrix is precomputed.  \\ 
\multicolumn{4}{l}{\textit{Hamiltonian simulation:}} \\
$\; \; \; $ Any input quantum state$^{**}$      & $\operatorname{\tilde{O}}\left(\left\lvert O\right\rvert_\infty^2\epsilon^{-2}n^4\log\left(\delta^{-1}\right)\right)$  & Thm. \ref{thm:perm_eq_dyns_quant_inp} &  Added $\operatorname{O}(n^6)$ time to process observable. \\
$\; \; \; $  Input specified in Schur basis$^{***}$    &  $\operatorname{O}(n^6)$ & Thm. \ref{thm:perm_inv_dynamics} &             \\
$\; \; \; $ Matrix product state input      & $\operatorname{O}(\chi^\omega n^3+n^7)$ & Thm. \ref{thm:time_evo_mps}  & $\chi$ is the bond dimension.           \\
\hline
\multicolumn{4}{l}{$^*$Bond dimension of output $O(n)$.}\\
\multicolumn{4}{l}{$^{**}$Requires low-depth circuit and single qubit measurements to measure the quantum state.}\\
\multicolumn{4}{l}{$^{***}$Previous work provides efficient algorithms for symmetric Dicke states \cite{shammah2018open,gegg2016efficient}.}
\end{tabular}
\caption{Summary of runtimes and results for finding the ground state and simulating permutation invariant Hamiltonians on inputs of different forms. Throughout, $n$ refers to the number of qubits, $\delta$ is the probability of success and $\epsilon$ is the error of the outcome.}
\label{tab:summary_table}
\end{table*}

\subsection{Input and output as Quantum State}
In this subsection, we will describe how Theorems~\ref{thm:sym_gs} and~\ref{thm:perm_inv_dynamics} can be applied if the input or output is directly given as a quantum state on $n$ qubits.
Of course, this requires applying some quantum gates, measurements, and state preparations.
However, the necessary quantum computation is of shallow depth and its function is merely to prepare the state or to measure a classical shadow of the state, whereas the core of the computation is classical.

Specifically, we use the shallow-depth quantum circuit from Ref.~\cite{harrowschur,bacon2006efficient} implementing the Schur transformation $V_{STO}$ defined by:
\begin{equation}
    V_{STO}\ket{\lambda, q_\lambda, p_\lambda}=\ket{\bm{\lambda}}\ket{\bm{q_\lambda}}\ket{\bm{p_\lambda}}.
\end{equation}
Here, $\ket{\bm{\lambda}},\ket{\bm{q_\lambda}},\ket{\bm{p_\lambda}}$ are bitstring encodings of $\lambda,q_\lambda,p_\lambda$, respectively, in the computational basis, labeling the $\bm{\lambda}$ register, $\bm{q}$ register, and $\bm{p}$ register, respectively.

\begin{theorem}[Efficient classical preparation of the ground state of a permutation-invariant Hamiltonian]
Given a Hamiltonian in the symmetrized Pauli basis as in Theorem~\ref{thm:sym_gs}, we can prepare one of its ground states as a quantum state to an accuracy $\epsilon$ in classical runtime $\operatorname{O}\left(n^7\right)$ using a quantum circuit of depth $\operatorname{\tilde{O}}\left(n\operatorname{poly}\log\left(\epsilon^{-1}\right)\right)$.
\end{theorem}
\begin{proof}
We first use Theorem~\ref{thm:sym_gs} to get a representation of the ground state in the $\ket{\lambda, q_\lambda, p_\lambda}$-basis, given by $\lambda_{\text{min}}$ and $\ket{\psi^\ast}$.
We then prepare the quantum state
\begin{equation}
\ket{\bm{\lambda_{\text{min}}}} \left(\sum_{q_{\lambda_{\text{min}}}} \psi_{q_{\lambda_{\text{min}}}}^\ast \ket{\bm{q_{\lambda_{\text{min}}}}}\right)
\ket{\bm{p_{\lambda_{\text{min}} 0}}}
\end{equation}
for some $\bm{p_{\lambda_{\text{min}}0}}$ of choice.
This preparation takes time $\operatorname{O}\left(n\right)$.
The quantum ground state is then obtained by applying $V_{\text{STO}}$ to the above state, which can be implemented in time $\operatorname{\tilde{O}}\left(n\operatorname{poly}\log\left(\epsilon^{-1}\right)\right)$ on a quantum computer~\cite{harrowschur}.
\end{proof}

\begin{theorem}[Efficient classical simulation of permutation-equivariant dynamics on a given quantum state]
Consider the classical evaluation of $\ell\left(\rho\right)$ as in Eq.~\eqref{eq:single_loss} using the same assumptions as Lemma~\ref{lem:sym_dyn} with symmetry group $\mathrm{S}_n$, except instead of $\tilde{\rho}$ one is given $\rho$ as a quantum state. Then, $\ell\left(\rho\right)$ can be estimated to additive error $\epsilon$ with probability $1-\delta$ via $\operatorname{\tilde{O}}\left(\left\lVert O\right\rVert_\infty^2\epsilon^{-2}n^4\log\left(\delta^{-1}\right)\right)$ calls to a quantum computer each of depth $\operatorname{\tilde{O}}\left(n\operatorname{poly}\log\left(\epsilon^{-1}\right)\right)$, up to an additional time $\operatorname{O}\left(n^7\right)$ in classical processing.\label{thm:perm_eq_dyns_quant_inp}
\end{theorem}
\begin{proof}
    This follows from using the Schur transform~\cite{bacon2006efficient, harrowschur} to prepare $\tilde{\rho}$ \changevfour{(as defined in Eq.~\eqref{eq:rho_tilde})} from $\rho$, which can be done in time $\operatorname{\tilde{O}}\left(n\operatorname{poly}\log\left(\epsilon^{-1}\right)\right)$ on a quantum computer. \changevfour{Note that $\tilde{\rho}$ is a $\left\lceil\log_2\left(n^2\right)\right\rceil$-qubit state as it only acts on the registers labeled by irreps $\lambda$ and their basis vectors $q_\lambda$, i.e., the multiplicity register $p_\lambda$ has been traced out.} Given a source of $\rho$, then, a quantum computer can efficiently use classical shadow techniques~\cite{huang2020predicting} to \changevfour{output a classical description of $\tilde{\rho}$ capable of estimating $\ell\left(\rho\right)$ to additive error $\epsilon$ with probability $1-\delta$ using
    \begin{equation}
        N=\operatorname{\tilde{O}}\left(\left\lVert\tilde{u}^\dagger\tilde{o}\tilde{u}\right\rVert_{\text{shadow}}^2\epsilon^{-2}\log\left(\delta^{-1}\right)\right)
    \end{equation}
    samples of the state. Here, $\left\lVert\cdot\right\rVert_{\text{shadow}}$ is the shadow norm, and $\tilde{o}$ and $\tilde{u}$ are defined as in Eqs.~\eqref{eq:tilde_o} and~\eqref{eq:tilde_u}, respectively. As $\tilde{u}^\dagger\tilde{o}\tilde{u}$ is on $\left\lceil\log_2\left(n^2\right)\right\rceil$ qubits we have the general bound~\cite{huang2020predicting}
    \begin{equation}
        \left\lVert\tilde{u}^\dagger\tilde{o}\tilde{u}\right\rVert_{\text{shadow}}^2\leq n^4\left\lVert\tilde{u}^\dagger\tilde{o}\tilde{u}\right\rVert_\infty^2=n^4\left\lVert O\right\rVert_\infty,
    \end{equation}
    where $\left\lVert\cdot\right\rVert_\infty$ is the operator norm.} We have then reduced the problem to a purely classical one, for which the classical runtime is given by Theorem~\ref{thm:perm_inv_dynamics}.
\end{proof}
The sample complexity of this procedure could potentially be improved to $\operatorname{\tilde{O}}(\left\lVert O\right\rVert_\infty^2\epsilon^{-2}n^3\log\left(\delta^{-1}\right))$ as $\tilde{\rho}$ only has $n^3$ degrees of freedom. However, the block diagonal structure over irreps is lost when transforming to the bitstring encoding $\ket{\bm{\lambda}}\ket{\bm{q}}\ket{\bm{p}}$ via the Schur transform, and thus we arrive at the sample complexity given.

\subsection{Input and Output as Matrix Product States}
\label{sec:mps_input_output}
The previous subsection implemented algorithms with quantum states as outputs and inputs.
In that particular setting, the readout of a quantum state requires some form of quantum computation of low depth.
Here we describe alternative algorithms where we give the inputs and outputs purely classically as low-bond dimension MPS, which are discussed in more detail in Appendix~\ref{sec:time_evo_appendix}.
Let us start by giving the output of Theorem~\ref{thm:sym_gs} as an MPS.

\begin{theorem}
\label{thm:ground_state_mps}
The output of Theorem~\ref{thm:sym_gs} can be given purely classically as an MPS of bond dimension $\operatorname{O}(n)$.
\end{theorem}
\begin{proof}
The output of Theorem~\ref{thm:sym_gs} is a value of $\lambda$ and a vector $\rho_{q_\lambda}$.
An MPO representation of the ground state can be obtained by contracting $\rho$ with the MPO for the Schur basis in  Eq.~\eqref{eq:schur_mpo_representations},
\begin{equation}
\begin{tikzpicture}
\atoms{square}{{a0/p={0,0}}, {a1/p={1.2,0}}, {a2/p={3,0}}}
\atoms{circ, all, lab={t=$\rho$, p=0:0.3}}{r/p={3.8,0}}
\draw (a0)--(a1) (a1)edge[mark={three dots,a}]++(0:0.4) (a2)edge[mark={three dots,a}]++(180:0.4);
\draw (a0-t)edge[ind=$l_0$]++(90:0.4) (a1-t)edge[ind=$l_1$]++(90:0.4) (a2-t)edge[ind=$l_{n-1}$]++(90:0.4);
\draw (a0-b)edge[ind=$(p_\lambda)_0$]++(-90:0.4) (a1-b)edge[ind=$(p_\lambda)_1$]++(-90:0.4) (a2-b)edge[ind=$(p_\lambda)_{n-1}$]++(-90:0.4);
\draw (a2-r)--(r);
\end{tikzpicture}.
\end{equation}
As we mentioned around Eq.~\eqref{eq:schur_mpo_representations}, the bond dimension of this MPO is $\operatorname{O}(n)$.
We consider this MPS representation in more detail in Appendix~\ref{sec:time_evo_appendix}.
\end{proof}

Next, let us consider Theorem~\ref{thm:perm_inv_dynamics} with an MPS as classical input.
\begin{theorem}
\label{thm:time_evo_mps}
Consider a state $\psi$ represented as an MPS of bond dimension $\chi$.
The time evolved expectation value $\ell\left(\rho\right)$ with $\rho=\ket\psi\bra\psi$ as in Theorem~\ref{thm:perm_inv_dynamics} can be computed classically in runtime $\operatorname{O}(\chi^\omega n^3+n^7)$.

\end{theorem}
\begin{proof}
Denoting the MPS as tensor-network diagram,
\begin{equation}
\begin{tikzpicture}
\atoms{square,small}{{a0/p={-0.2,1.4}}, {a1/p={1.3,1.4}}, {a2/p={3.8,1.4}}}
\draw[mps] (a0)--(a1) (a1)edge[mark={three dots,a}]++(0:0.9) (a2)edge[mark={three dots,a}]++(180:0.5);
\draw (a0-t)edge[ind=$l_0$]++(90:0.4) (a1-t)edge[ind=$l_1$]++(90:0.4) (a2-t)edge[ind=$l_{n-1}$]++(90:0.4);
\end{tikzpicture}\;,
\end{equation}
the reduced density matrix $\tilde{\rho}^\lambda$ as defined in Eq.~\eqref{eq:rho_tilde}
is given by the overlap with twice Eq.~\eqref{eq:schur_mpo_representations}:
\begin{equation}
\begin{tikzpicture}
\atoms{square}{{a0/p={0,-0.3}}, {a1/p={1.2,-0.3}}, {a2/p={3,-0.3}}}
\atoms{square}{{b0/p={0,0.3}}, {b1/p={1.2,0.3}}, {b2/p={3,0.3}}}
\atoms{square,small}{{c0/p={0,0.8}}, {c1/p={1.2,0.8}}, {c2/p={3,0.8}}}
\atoms{square,small}{{d0/p={0,-0.8}}, {d1/p={1.2,-0.8}}, {d2/p={3,-0.8}}}
\draw (a0)--(a1) (a1)edge[mark={three dots,a}]++(0:0.4) (a2)edge[mark={three dots,a}]++(180:0.4);
\draw (b0)--(b1) (b1)edge[mark={three dots,a}]++(0:0.4) (b2)edge[mark={three dots,a}]++(180:0.4);
\draw (c0)--(c1) (c1)edge[mark={three dots,a}]++(0:0.4) (c2)edge[mark={three dots,a}]++(180:0.4);
\draw (d0)--(d1) (d1)edge[mark={three dots,a}]++(0:0.4) (d2)edge[mark={three dots,a}]++(180:0.4);
\draw (a0)--(b0) (a0)--(d0) (b0)--(c0);
\draw (a1)--(b1) (a1)--(d1) (b1)--(c1);
\draw (a2)--(b2) (a2)--(d2) (b2)--(c2);
\draw (a2-r)edge[ind=$q_\lambda$]++(0:0.4) (b2-r)edge[ind=$q_\lambda'$]++(0:0.4);
\end{tikzpicture}\;.
\end{equation}
The naive contraction of this tensor network from left to right would need runtime $\operatorname{O}(n^8\chi^4)$.
However, using that the MPO in the middle is band diagonal, the runtime reduces to $\operatorname{O}(n^4\chi^4)$.
We discuss this contraction in more detail in Appendix~\ref{sec:time_evo_appendix}.
After we have obtained $\rho$, we can proceed as in Theorem~\ref{thm:perm_inv_dynamics}.
\end{proof}

\begin{remark} \label{rem:n6_tensor}
    If the matrix $F$ in Theorem~\ref{thm:sym_gs} is pre-computed, then this runtime is reduced to $\operatorname{O}(\chi^\omega n^3+n^6)$.
\end{remark}

\subsection{Applications to machine learning}

Though it is widely believed that general quantum machine learning models are more expressive than their classical counterparts~\cite{liu2021rigorous}, it is known that generic quantum machine learning models are untrainable~\cite{mcclean2018barren,cerezo2021cost,marrero2021entanglement,anschuetz2022critical,napp2022quantifying,larocca2022diagnosingbarren,larocca2021,anschuetzkiani2022}. This gives a strong motivation for constructing quantum machine learning models from symmetry-equivariant time dynamics~\cite{equivariant} which are believed to be efficiently trainable~\cite{larocca2022diagnosingbarren,larocca2021}. This was formally proven for $\operatorname{S}_n$-equivariant models in Ref.~\cite{schatzki2022theoretical}.

As a specific application of our results, we now consider one of the learning problems for which a variational quantum algorithm was given in Ref.~\cite{schatzki2022theoretical}. \changevtwo{We emphasize that here, just as in Lemma~\ref{lem:sym_dyn}, we do not require that the input states $\rho_i$ respect the symmetries of the model.} For simplicity we assume the quantum input setting of Theorem~\ref{thm:perm_eq_dyns_quant_inp}, but a similar theorem can be given for MPS inputs.
\begin{corollary}[Efficient classical simulation of permutation-invariant models]
Consider a binary classification problem with labels $y_i\in\left\{-1,1\right\}$ and empirical loss
\begin{equation}
    \hat{\mathcal{L}}\left(\bm{\theta}\right)=-\frac{1}{M}\sum\limits_{i=1}^M y_i\ell_{\bm{\theta}}\left(\rho_i\right),
\end{equation}
where $\ell_{\bm{\theta}}\left(\rho_i\right)$ is as in Eq.~\eqref{eq:single_loss} with a $\bm{\theta}$-dependent $U$ and otherwise the same assumptions as in Theorem~\ref{thm:perm_eq_dyns_quant_inp}. $\hat{\mathcal{L}}$ can be estimated to additive error $\epsilon$ at $P$ points in time
\begin{equation}
    \operatorname{\tilde{O}}\left(M\left\lVert O\right\rVert_\infty^2\epsilon^{-2}n^5\log\left(\frac{P}{\delta}\right)+MPn^{\omega+1}+n^7\right)
\end{equation}
with total probability of success at least $1-\delta$.\label{cor:mac_learn}
\end{corollary}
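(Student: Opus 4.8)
The plan is to reduce the estimation of $\hat{\mathcal{L}}$ at each of the $P$ points to a collection of single-loss estimations governed by Theorem~\ref{thm:sym_dyn}, and then to exploit the fact that the quantum resource---the classical shadows---can be shared across both the $M$ data points and the $P$ parameter settings. First I would instantiate Theorem~\ref{thm:sym_dyn} in the permutation-invariant setting, where the Schur basis gives $n_\lambda=\operatorname{O}(n)$ irreps of maximal dimension $n_q=\operatorname{O}(n)$, so that $n_\lambda^2 n_q^2=\operatorname{O}(n^4)$, and where $V_{STO}$ is realized by the Schur transform of depth $v=\operatorname{\tilde{O}}(n)$~\cite{harrowschur}. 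A single estimate of $\ell_{\bm\theta}(\rho_i)$ then costs $\operatorname{\tilde{O}}(\lvert O\rvert_\infty^2\epsilon^{-2}n^4\log(\delta^{-1}))$ circuits, each of depth $v+1=\operatorname{\tilde{O}}(n)$, together with $\operatorname{O}(n^{\omega+1})$ classical processing for the matrix $M_{\bm{p_0}}$.

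The key observation is that the shadows measured in the proof of Theorem~\ref{thm:sym_dyn} are shadows of $\tilde{\rho}_i=\tr_{\bm{p}}(V_{STO}\rho_i V_{STO}^\dagger)$, which depends only on the data point $\rho_i$ and not on the parameter-dependent unitary $U=U(\bm{\theta})$ nor on $O$. Hence I would collect one batch of classical shadows per data point and reuse it to evaluate $\tilde{\ell}_{\bm\theta_j}(\rho_i)=\tr(M^{(j)}_{\bm{p_0}}\tilde{\rho}_i)$ for all $P$ parameter settings. Invoking the simultaneous multi-observable guarantee of classical shadows~\cite{huang2020predicting} (median-of-means over the $P$ predicted functionals), a batch of $\operatorname{\tilde{O}}(\lvert O\rvert_\infty^2\epsilon^{-2}n^4\log(P/\delta))$ shadows per data point estimates all $P$ functionals $\tr(M^{(j)}_{\bm{p_0}}\tilde{\rho}_i)$ to additive error $\epsilon$, turning what would be a linear-in-$P$ overhead into the logarithmic $\log(P/\delta)$ appearing in the claim. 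Multiplying by the $M$ data points and by the depth $\operatorname{\tilde{O}}(n)$ of each circuit yields the quantum term $\operatorname{\tilde{O}}(M\lvert O\rvert_\infty^2\epsilon^{-2}n^5\log(P/\delta))$.

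For the classical contribution, I would apply the classical processing of Theorem~\ref{thm:sym_dyn} to each of the $MP$ loss evaluations, each costing $\operatorname{O}(n^{\omega+1})$, for a total of $\operatorname{O}(MPn^{\omega+1})$. The error and success probability then follow from two elementary steps. Since $\lvert y_i\rvert=1$ and $\hat{\mathcal{L}}$ is the normalized average $-\tfrac{1}{M}\sum_i y_i\ell_{\bm\theta}(\rho_i)$, controlling each $\ell_{\bm\theta_j}(\rho_i)$ to additive error $\epsilon$ immediately controls $\hat{\mathcal{L}}(\bm\theta_j)$ to error $\epsilon$ with no amplification; and allocating failure budget $\delta/M$ to each data point and taking a union bound over the $M$ independent batches makes all $MP$ estimates simultaneously accurate with probability at least $1-\delta$ (the residual $\log M$ is absorbed into $\operatorname{\tilde{O}}$).

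I expect the main obstacle to be the bookkeeping that delivers the $\log(P/\delta)$ dependence rather than a naive factor of $P$: one must verify that the shadow ensemble is genuinely independent of $\bm{\theta}$, so that a single measurement budget supports every parameter setting, and then invoke the simultaneous-prediction bound with the correct union bound across both the $P$ functionals and the $M$ data points. A secondary point of care is confirming that the $1/M$ normalization prevents any amplification of the per-term error, so that no rescaling of $\epsilon$---and hence no extra factor of $M$ in the sample complexity---is incurred.
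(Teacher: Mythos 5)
Your proposal is correct and takes essentially the same route as the paper, whose entire proof reads: ``This follows immediately from Theorem~\ref{thm:sym_dyn} with $\delta\to\frac{\delta}{P}$ by the union bound.'' Your explicit observation that the classical shadows are shadows of $\tilde{\rho}_i=\tr_{\bm{p}}\left(V_{STO}\rho_i V_{STO}^\dagger\right)$, independent of $U\left(\bm{\theta}\right)$ and $O$, so that one shadow batch per data point serves all $P$ parameter settings (yielding the factor $M$ rather than $MP$ in the quantum term, with only the $\operatorname{O}\left(n^{\omega+1}\right)$ classical processing repeated $MP$ times), is exactly the content the paper leaves implicit in that one line, so you have correctly filled in the intended argument rather than diverged from it.
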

\begin{proof}
This follows immediately from Theorem~\ref{thm:perm_eq_dyns_quant_inp} with $\delta\to\frac{\delta}{P}$ by the union bound.
\end{proof}
\changevtwo{Corollary~\ref{cor:mac_learn} implies that the loss of these models can be estimated completely classically when the states $\rho_i$ are given as certain classical shadows or matrix product state descriptions, even if they do not respect the symmetries of the model.} As a point of comparison, consider the runtime of using a variational quantum algorithm to perform this binary classification task. Assume the variational circuits are of depth $\operatorname{\Omega}\left(n^3\right)$ as required in Theorem 3 of Ref.~\cite{schatzki2022theoretical} to ensure convergence. Then---taking $\operatorname{\Omega}\left(\left\lVert O\right\rVert_\infty^2\epsilon^{-2}\right)$ samples for each measurement to achieve an overall shot noise of $\operatorname{O}\left(\epsilon\right)$---this yields an overall runtime of $\operatorname{\Omega}\left(MP\left\lVert O\right\rVert_\infty^2\epsilon^{-2}n^3\right)$. For $P$ sufficiently large, compare this to the time $\operatorname{O}\left(MPn^{\omega+1}\right)$ algorithm found for the classical algorithm where, even if quantum states are given as input, a classical shadow representation can be measured in quantum depth only $\operatorname{\tilde{O}}\left(n\operatorname{poly}\log\left(\epsilon^{-1}\right)\right)$. Unlike the quantum algorithm, this algorithm can be parallelized over irreps (i.e., over $n_\lambda$) easily, giving an effective runtime $\operatorname{O}\left(MPn^\omega\right)=\operatorname{o}\left(MPn^3\right)$. Even for $P$ small, given many QPUs capable of running depth $\sim n$ quantum circuits, the classical algorithm parallelizes more effectively than the quantum algorithm as the required shadow tomography can be straightforwardly parallelized over shots.

\section{Conclusion}

We have specified a general framework for classically simulating highly symmetric quantum systems. Specializing to the symmetric group, we showed that these techniques yield an efficient classical algorithm for finding the ground state of quantum systems obeying an $\mathrm{S}_n$ symmetry, evaluating dynamics, and simulating $\mathrm{S}_n$-equivariant quantum machine learning models. We hope that this framework sets the foundations for the future study of classical characterizations of symmetric quantum systems.  \changevfour{Potential applications include the physical analysis of symmetric quantum systems \cite{zeier2011symmetry,chase2008collective,kirton2018superradiant,shankar2017steady} and implementation of algorithmic primitives for measuring entanglement \cite{horodecki2009quantum}, performing measurements with distributed sensors \cite{zhang2021distributed}, and learning or testing quantum states \cite{alicki1988symmetry,o2021learning,childs2007weak}. }

A major remaining open question is to what extent the classical algorithms we have proposed for the case of qubit permutation invariance carry over to the more general highly symmetric many-body quantum systems. The approach we presented in Section~\ref{sec:general_groups} already works for general symmetry groups and representations.
However, the computation of the matrix elements of $F$, as well as the representations of ground and initial states as matrix product states or quantum states was only described for the case of qubit permutation invariance. The perhaps most obvious generalization to other symmetry groups of representations is to consider permutation-invariant systems consisting of $n$ $d$-dimensional qudits instead of qubits.
We sketch this generalization in Appendix~\ref{sec:qudits}.
As it turns out, all algorithms we have proposed generalize with polynomial runtimes, however, the exponents become very large for large $d$.
Specifically, the computation of the matrix elements $F$, which is the bottleneck for many of our algorithms, takes time $\operatorname{O}(n^{2d^2-1})$ in this case.

It should be noted that the rather large exponents of the polynomial runtimes ($\operatorname{O}(n^7)$ for qubits and $\operatorname{O}(n^{2d^2-1})$ for qudits) do not come from the method we use, but from the way we phrase the problem itself:
Specifying a symmetric Hamiltonian alone needs $\operatorname{O}(n^3)$ data for qubits, and $\operatorname{O}(n^{d^2-1})$ data for qudits, so when measuring the runtime in the size of the input data, the scaling is just a little more than quadratic.
For more efficient algorithms we would need to restrict the input to a subset of permutation-invariant Hamiltonians/operators, for example to Hamiltonians that are sparse in the symmetrized Pauli basis, or $k$-local Hamiltonians for constant $k$.

Our results indicate that care must be taken when constructing quantum algorithms to solve problems described by strongly symmetric quantum systems to ensure no equally efficient classical algorithm exists, as was demonstrated in Corollary~\ref{cor:mac_learn}. \changevfour{One approach to avoid classical simulation methods is to consider smaller symmetry groups as illustrated in Figure~\ref{fig:illustrative_stuff}(a). Indeed, complexity theory arguments~\cite{10.1145/3519935.3520067,10.1145/3519935.3520052} suggest that solving the ground state problem for systems with sufficiently small symmetry groups---such as translational invariance---is expected to be difficult.} \changevfour{It is also} reasonable to expect quantum algorithms for strongly symmetric quantum systems to offer polynomial speedups in appropriately selected scenarios, such as has been demonstrated in Ref.~\cite{PRXQuantum.4.020338}. Notably, the tensor contraction algorithms outlined in Theorem~\ref{thm:sym_gs} currently have a runtime of $\operatorname{O}(n^7)$, which in terms of the number of qubits $n$ is not as optimal as quantum algorithms. We hope this work sparks further investigation into how problems involving symmetric quantum systems may produce a practical use case for quantum algorithms.

\begin{acknowledgments}
     Code implementing some of the algorithms discussed is available at \url{https://github.com/bkiani/symmetric_hamiltonians}. \changevtwo{We thank Quynh T.\ Nguyen for pointing out connections to tensor networks. We thank Marco Cerezo, Martin Larocca, Frederic Sauvage, and Louis Schatzki for their helpful feedback on the first draft of this manuscript.} E.R.A.\ is supported by STAQ under award NSF Phy-1818914. A.B., B.T.K., and S.L.\ were supported by DARPA under the RQMLS program. A.B.\ is supported by the DFG (CRC 183), BMWK (PlanQK, EniQmA), and the Munich Quantum Valley (K-8).
\end{acknowledgments}

\bibliographystyle{quantum}
\bibliography{main}

\onecolumn
\appendix
\pagebreak

\section{The Schur basis}
\label{app:schur_basis}

In the presence of permutation invariance, the action of operations can be fully understood by analyzing a much smaller subspace of the larger Hilbert space. To precisely understand the form of that subspace, we turn to the Schur--Weyl decomposition of $n$ qubits into subspaces corresponding to irreducible representations of the symmetric and unitary groups labeled by Young diagrams. Schur--Weyl duality offers a means to perform this decomposition by considering the natural representations of the permutation group and $n$-fold unitary group acting on $n$ qubits  \cite{harrowschur,ping2002group}. To describe the Schur basis and the resulting Schur transform, first we note the natural action of a permutation operation $R(\pi)$ acting on qubits:
\begin{equation}
    R(\pi) \ket{i_1} \otimes \ket{i_2} \otimes \cdots \otimes \ket{i_n} = \ket{ i_{\pi^{-1}1}} \otimes \ket{ i_{\pi^{-1}2}} \otimes \cdots \otimes \ket{ i_{\pi^{-1}1}}
\end{equation}
as in the main text.

Similarly, a unitary $U \in \mathcal{U}(2)$ acting as the $n$-fold product $Q(U)$ takes the form
\begin{equation}
    Q(U) \ket{i_1}\otimes \ket{i_2} \otimes\cdots \otimes\ket{i_n} = U\ket{ i_1}\otimes U \ket{i_2} \otimes\cdots \otimes U\ket{ i_n}.
\end{equation}

Schur--Weyl duality takes advantage of the fact that $Q(\cdot)$ and $R(\cdot)$ are each others' commutants, stating that the subspace of $(\mathbb{C}^2)^{\otimes n}$ decomposes as 
\begin{equation}
    Q(U) R(\pi) \cong \bigoplus_\lambda \rho_\lambda(U) \otimes \sigma_\lambda(\pi),
\end{equation}
where $\lambda$ runs over the set of partitions of $n$ into at most two elements, and $\rho_\lambda(\cdot)$ and $\sigma_\lambda(\cdot)$ are irreducible representations of the unitary group $\mathcal{U}(2)$ and the symmetric group $\mathit{S}_n$, respectively. Note that irreps of both of these groups are indexed by partitions. More generally, for the space $(\mathbb{C}^d)^{\otimes n}$ of $n$ qudits of dimension $d$, the $\lambda$ would span over partitions of $n$ into at most $d$ elements. Partitions can equivalently be enumerated by Young diagrams. For example for the setting of $4$ qubits, we have the $3$ Young diagrams below that appear in the decomposition above:
\begin{align*}
\lambda = (4,0):& \begin{ytableau}
 \none[] & & & &
\end{ytableau}, \\ \\
\lambda = (3,1):&
\begin{ytableau}
 \none[]&&&\cr \none[] &
\end{ytableau},\\ \\
\lambda = (2,2):&\begin{ytableau}
 \none[] &&\cr \none[]&&
\end{ytableau}.
\end{align*}

A consequence of the above is that there exists a basis indexed by $\ket{\lambda,q_\lambda,p_\lambda}$ called the \emph{Schur basis} where the actions of $Q(\cdot)$ and $R(\cdot)$ are separated~\cite{harrowschur}:
\begin{align}
    Q(U)\ket{\lambda,q_\lambda,p_\lambda} &= \rho_\lambda(U)\ket{\lambda,q_\lambda,p_\lambda}, \\
    R(\pi)\ket{\lambda,q_\lambda,p_\lambda} &= \sigma_\lambda(\pi)\ket{\lambda,q_\lambda,p_\lambda},
\end{align}
where we have implicitly projected onto the subspace indexed by $\lambda$. Here, $\rho_\lambda(U)$ and $\sigma_\lambda(\pi)$ act only on the $q_\lambda$ and $p_\lambda$ space, respectively. $\rho_\lambda(U)$ and $\sigma_\lambda(\pi)$ are respectively the linear transformations corresponding to the irreducible representations of $\mathrm{U}_d$ and $\mathrm{S}_n$ for the irreducible representation indexed by $\lambda$. The above also presents a useful fact about permutation invariance. Namely, such an operation will act invariantly on the permutation register $\ket{p_\lambda}$ thus significantly reducing the degrees of freedom of a problem. The Schur transform $U_{Sch}$ is a unitary transformation that acts as a change of basis from the computational to the Schur basis described above. The Schur transform can be efficiently implemented on a quantum computer running in time $\operatorname{\tilde{O}}\left(n \text{poly}(d,\log n,1/\epsilon)\right)$ for error $\epsilon$ on qudit systems of local dimension $d$~\cite{harrowschur}. We follow the notation of Ref.~\cite{harrowschur}:
\begin{equation}
    \ket{\lambda,q_\lambda,p_\lambda} = \sum_{i_1,i_2,\dots,i_n = 0}^{d-1} [U_{Sch}]_{i_1,i_2,\dots,i_n}^{\lambda,q_\lambda,p_\lambda} \ket{i_1} \ket{i_2} \cdots \ket{i_n}.  
\end{equation}

As noted in the main text, the total degrees of freedom reduces to $\binom{n+3}{3}$ in settings with permutation invariance. To see this, note that the dimension of the $\ket{q_\lambda}$ register for a partition $(a,b)$ is equal to $a-b+1$. Therefore, we have
\begin{equation}
    \text{DOF} = \sum_{k=0}^{\lfloor n / 2 \rfloor} \left(2k+1 + n - 2\left\lfloor \frac{n}{2}\right\rfloor \right)^2 = \binom{n+3}{3}
\end{equation}
degrees of freedom. A similar calculation can be performed via a stars-and-bars counting argument. The above is also enumerated by the tetrahedral numbers~\cite{oeis}.

\begin{figure}[t]
    \centering
    \includegraphics{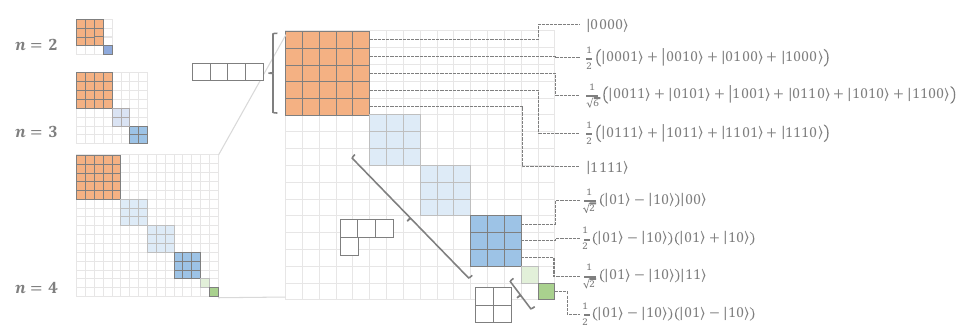}
    \caption{Graphical depiction of Schur decomposition for $n=4$ qubits. There are three Young diagrams of at most two rows for $4$ qubits. Due to the presence of permutation invariance, we can restrict attention to the darker colored subspaces which correspond to a single subspace over the multiplicity of the permutation irreps. To project onto this darker colored subspace, we use the Young symmetrizer (Eq.~\eqref{eq:young_symmetrizer}). }
    \label{fig:schur_decomposition_visual}
\end{figure}

To expand and manipulate individual basis states indexed by the $\ket{q_\lambda}$ register, one can use the Young symmetrizer $\Pi_{p_\lambda}$ to project onto an explicit basis for each $\lambda$ \cite{harrowschur,fulton1997young}. Here, $p_\lambda$ is a particular Young tableau for the Young diagram $\lambda$. The Young symmetrizer projects onto a subspace isomorphic to the subspace spanned by $\ket{q_\lambda}$:
\begin{equation}
\label{eq:young_symmetrizer}
    \Pi_{p_\lambda} = \frac{\operatorname{dim}(\lambda)}{n!} \left( \sum_{c \in \text{Col}(p_\lambda)} \operatorname{sgn}(c) \changevthree{R}(c) \right) \left( \sum_{r \in \text{Row}(p_\lambda)}  \changevthree{R}(r) \right),
\end{equation}
where $\text{Row}(p_\lambda)$ and $\text{Col}(p_\lambda)$ are the set of permutations which permute integers within only rows and columns of the Young tableau $p_\lambda$, respectively \cite{harrowschur,fulton1997young}. An example of the basis found via application of the Young symmetrizer is shown in Figure~\ref{fig:schur_decomposition_visual}. Throughout our study, we consider the Young tableau formed by filling entries in order first column-wise and then row-wise to be the ``canonical" basis that we study. As an example, for 4 qubits, there are the following Young tableaus in our ``canonical" basis:
\begin{align}
\label{eq:standard_young_tableaux}
\begin{ytableau}
 1 & 2 & 3 & 4 
\end{ytableau}, \;\;\;\;
\begin{ytableau}
 1 & 3 & 4 \cr 2 
\end{ytableau}, \; \; \; \; 
\begin{ytableau}
 1& 3 \cr 2 & 4
\end{ytableau}.
\end{align}

\section{General formalism using tensor-network diagrams}
\label{sec:general_tn_appendix}

In this and the two subsequent appendices, we discuss the tensor-network based methods used in Section~\ref{sec:permutation_invariance} of the main text in more detail. As a preparation, we will here recap the general formalism from Section~\ref{sec:general_groups} in a tensor-network language.

An associative algebra on a vector space $B$ is determined by a linear multiplication map $B\otimes B\rightarrow B$, whose structure coefficients form a 3-index tensor,
\begin{equation}
\label{eq:algebra_tensor}
\begin{tikzpicture}
\atoms{circ}{0/}
\draw (0)edge[mark={arr,f,s},ind=$a$]++(135:0.8) (0)edge[mark={arr,f,s},ind=$b$]++(45:0.8) (0)edge[ind=$c$]++(-90:0.8);
\end{tikzpicture}\;.
\end{equation}
Here, $a$ and $b$ label basis vectors at the input of the multiplication map, and $c$ is at the output. In our setting, there are three different relevant algebras, namely $X$, the group algebra of $S_n$, and the group algebra of $SU(2)$. A representation of the algebra on a vector space $V$ is determined by a linear map $V\otimes B\rightarrow V$, whose structure coefficients are again a 3-index tensor,
\begin{equation}
\begin{tikzpicture}
\atoms{square}{0/}
\draw (0)edge[ind=$a$]++(90:0.8) (0)edge[nqubit,mark={arr,f,s},ind=$l$]++(180:0.8) (0)edge[nqubit,ind=$m$]++(0:0.8);
\end{tikzpicture}\;.
\end{equation}
As depicted, we will often use different line styles for different vector spaces such as $B$ and $V$; here we made the $V$-indices thick. Again, there are three different relevant representations: The representation $A$ of $X$, the representation $R$ of $S_n$, and the representation $(\frac12)^{\otimes n}$ of $SU(2)$, all of which act on the $n$-qubit Hilbert space. The statement that a Hamiltonian (or an arbitrary operator) $H$ is invariant under the symmetry representation $R$ can be written as
\begin{equation}
\label{eq:h_invariant}
\begin{tikzpicture}
\atoms{square}{0/lab={t=$R$,p=-90:0.4}}
\atoms{square,lab={t=$H$,p=-90:0.4}}{h/p={-1,0}}
\draw (0)edge[ind=$a$]++(90:0.8) (h-l)edge[nqubit,ind=$l$,mark={arr,f,s}]++(180:0.4) (h-r)edge[mark={arr,f,e},nqubit](0) (0)edge[nqubit,ind=$m$]++(0:0.8);
\end{tikzpicture}
=
\begin{tikzpicture}
\atoms{square}{0/lab={t=$R$,p=-90:0.4}}
\atoms{square,lab={t=$H$,p=-90:0.4}}{h/p={1,0}}
\draw (0)edge[ind=$a$]++(90:0.8) (0)edge[nqubit,mark={arr,f,s},ind=$l$]++(180:0.8) (h-r)edge[nqubit,ind=$m$]++(0:0.4) (h-l)edge[nqubit,mark={arr,f,s}](0);
\end{tikzpicture}\;.
\end{equation}
The commutativity of the representation $A$ of $X$ with the representation $R$ of $S_n$ reads
\begin{equation}
\label{eq:commutant}
\begin{tikzpicture}
\atoms{square}{0/lab={t=$R$,p=-90:0.4}}
\atoms{square,lab={t=$A$,p=-90:0.4}}{h/p={-1,0}}
\draw (0)edge[ind=$a$]++(90:0.8) (h)edge[ind=$b$]++(90:0.8) (h-l)edge[nqubit,mark={arr,f,s},ind=$l$]++(180:0.4) (h-r)edge[mark={arr,f,e},nqubit](0) (0-r)edge[nqubit,ind=$m$]++(0:0.4);
\end{tikzpicture}
=
\begin{tikzpicture}
\atoms{square}{0/lab={t=$R$,p=-90:0.4}}
\atoms{square,lab={t=$A$,p=-90:0.4}}{h/p={1,0}}
\draw (0)edge[ind=$a$]++(90:0.8) (h)edge[ind=$b$]++(90:0.8) (0-l)edge[nqubit,mark={arr,f,s},ind=$l$]++(180:0.4) (h-r)edge[nqubit,ind=$m$]++(0:0.4) (h-l)edge[nqubit,mark={arr,f,s},](0);
\end{tikzpicture}\;.
\end{equation}
Since by construction $A$ spans the whole commutant of $R$, $H$ can be parametrized by some $X$-element $h$ as
\begin{equation}
\label{eq:H_from_H}
\begin{tikzpicture}
\atoms{square,lab={t=$H$,p=-90:0.4}}{h/}
\draw (h-l)edge[mark={arr,f,s},nqubit,ind=$l$]++(180:0.4) (h-r)edge[nqubit,ind=$m$]++(0:0.4);
\end{tikzpicture}
=
\begin{tikzpicture}
\atoms{square}{{h/lab={t=$h$,p=90:0.4},p={0,0.8}}}
\atoms{square,lab={t=$A$,p=-90:0.4}}{a/}
\draw (a)--(h) (a-l)edge[mark={arr,f,s},nqubit,ind=$l$]++(180:0.4) (a-r)edge[nqubit,ind=$m$]++(0:0.4);
\end{tikzpicture}\;.
\end{equation}
Eq.~\eqref{eq:h_invariant} then follows through Eq.~\eqref{eq:commutant}.

Next, let us describe how to obtain the ground state energy and ground state via Theorem~\ref{thm:sym_gs} from the main text using the matrix $F$. To this end, we use a central theorem of representation theory \cite{Davidson1996}: 1) Every finite-dimensional *-algebra (which includes group algebras and the algebra $X$) is isomorphic to a direct sum of full matrix algebras, and 2) every (unitary) representation is isomorphic to a direct sum of irreducible representations, which are projections onto one of the matrix-algebra summands. The two isomorphisms, which we call $O$ and $S$, respectively, are unitary matrices,
\begin{equation}
\begin{tikzpicture}
\atoms{triang,rot=90,lab={t=$O$,p=-120:0.35}}{0/}
\draw[irrep] (0-mb)edge[ind=$\lambda$]++(0:0.5);
\draw (0-ct)edge[ind=$a$]++(180:0.5);
\draw[qind] (0-cl)edge[ind=$q$]++(-10:0.5) (0-cr)edge[ind=$q'$]++(10:0.5);
\end{tikzpicture}
\;,\qquad
\begin{tikzpicture}
\atoms{triang,rot=90,lab={t=$S$,p=-120:0.35}}{0/}
\draw[irrep] (0-mb)edge[ind=$\lambda$]++(0:0.5);
\draw[nqubit] (0-ct)edge[ind=$l$]++(180:0.5);
\draw[qind] (0-cr)edge[ind=$q$]++(10:0.5);
\draw (0-cl)edge[pind,postaction={ind=$p$}]++(-10:0.5);
\end{tikzpicture}\;.
\end{equation}
Here, the index $a$ labels basis elements of the algebra, the \emph{irrep index} $\lambda$ labels irreducible representations of the algebra, \emph{internal indices} like $q$ and $q'$ label basis vectors within an irreducible representation $\lambda$, and the \emph{multiplicity index} $p$ labels the different copies of a fixed irreducible representation $\lambda$, if $\lambda$ occurs multiple times in the representation. Note that indices with different bond dimensions were drawn with different line styles, such as thick for irrep, dotted for internal, and ticked for multiplicity indices. Note that the bond dimension of the internal and multiplicity indices can depend on the value $\lambda$ of the nearby irrep index, so the diagrams are not classical tensor-network notation but a slightly generalized version thereof. In this notation, the isomorphism can be written as
\begin{equation}
\begin{tikzpicture}
\atoms{circ}{0/}
\atoms{triang}{{o0/p={0.8,0},rot=90,lab={t=$O^*$,p=120:0.35}},{o1/p={-0.8,0},rot=-90,lab={t=$O^*$,p=60:0.35}},{o2/p={0,-0.8},lab={t=$O$,p=30:0.35}}}
\draw (0)--(o0-ct) (0)--(o1-ct) (0)--(o2-ct);
\draw[irrep] (o0-mb)--++(0:0.4) (o1-mb)--++(180:0.4) (o2-mb)--++(-90:0.4);
\draw[pind] (o0-cl)--++(0:0.4) (o1-cl)--++(180:0.4) (o2-cl)--++(-90:0.4) (o0-cr)--++(0:0.4) (o1-cr)--++(180:0.4) (o2-cr)--++(-90:0.4);
\end{tikzpicture}
=
\begin{tikzpicture}
\atoms{delta}{0/}
\draw[irrep] (0)--++(0:0.8) (0)--++(180:0.8) (0)--++(-90:0.8);
\draw[pind,rounded corners=5pt] (-0.8,0.2)--(0.8,0.2) (-0.8,-0.2)-|(-0.2,-0.8) (0.2,-0.8)|-(0.8,-0.2);
\end{tikzpicture}\;,\qquad
\begin{tikzpicture}
\atoms{square}{0/}
\atoms{triang}{{o0/p={0.8,0},rot=90,lab={t=$S$,p=-120:0.35}},{o1/p={-0.8,0},rot=-90,lab={t=$S^*$,p=-60:0.35}},{o2/p={0,0.8},rot=180,lab={t=$O^*$,p=-30:0.35}}}
\draw (0)--(o2-ct);
\draw[nqubit] (0)--(o0-ct) (0)--(o1-ct);
\draw[irrep] (o0-mb)--++(0:0.4) (o1-mb)--++(180:0.4) (o2-mb)--++(90:0.4);
\draw[pind] (o0-cr)--++(0:0.4) (o2-cl)--++(90:0.4) (o1-cl)--++(180:0.4) (o2-cr)--++(90:0.4);
\draw[qind] (o0-cl)--++(0:0.4) (o1-cr)--++(180:0.4);
\end{tikzpicture}
=
\begin{tikzpicture}
\atoms{delta}{0/}
\draw[irrep] (0)--++(0:0.8) (0)--++(180:0.8) (0)--++(90:0.8);
\draw[pind,rc] (-0.8,0.2)-|(-0.2,0.8) (0.2,0.8)|-(0.8,0.2);
\draw[qind] (-0.8,-0.2)--(0.8,-0.2);
\end{tikzpicture}\;.
\end{equation}
The small black dot on both right-hand sides is a $\delta$-tensor, which is $1$ if the irrep labels at all indices are equal, and $0$ otherwise. The free lines represent identity matrices. The three identity matrices on the right-hand side of the left equation are matrix multiplication: Each matrix is represented by a double-index, and the second index of the first matrix is contracted with the first index of the second matrix.

We now use the isomorphism $O$ of the symmetry group algebra, and the isomorphism $S$ of its representation. Let us quickly recall what this means for the symmetry group $S_n$ and its qubit permutation representation $R$, even though the discussion here in principle works for general symmetries. As described in the first appendix, the irreps $\lambda$ are Young diagrams and the internal indices label Young tableaux $p_\lambda$. The dimension of the multiplicity index $q$ is only non-zero for Young diagrams with one or two rows. In this case, we can label the irrep by a half-integer $0\leq\lambda\leq\frac{n}{2}$ corresponding to the Young diagram with two rows of lengths $(n/2+\lambda, n/2-\lambda)$, but also to a $SU(2)$ representation. We now let $\widetilde H$ denote the Hamiltonian (or an arbitrary invariant operator) $H$ conjugated with $S$,
\begin{equation}
\label{eq:hamiltonian_block_diagonal}
\begin{tikzpicture}
\atoms{square,yscale={0.4/0.36},lab={t=$\widetilde H$,p=-90:0.5}}{h/}
\draw[irrep] (h-r)--++(0:0.4) (h-l)--++(180:0.4);
\draw[pind] (h-tr)--++(0:0.4) (h-tl)--++(180:0.4);
\draw[qind] (h-br)--++(0:0.4) (h-bl)--++(180:0.4);
\end{tikzpicture}
\coloneqq
\begin{tikzpicture}
\atoms{square,lab={t=$H$,p=-90:0.4}}{h/}
\atoms{triang}{{o0/p={0.8,0},rot=90,lab={t=$S$,p=-120:0.35}},{o1/p={-0.8,0},rot=-90,lab={t=$S^*$,p=-60:0.35}}}
\draw[nqubit] (h-r)--(o0-ct) (h-l)edge[mark={arr,f,s}](o1-ct);
\draw[irrep] (o0-mb)--++(0:0.4) (o1-mb)--++(180:0.4);
\draw[pind] (o0-cr)--++(0:0.4) (o1-cl)--++(180:0.4);
\draw[qind] (o0-cl)--++(0:0.4) (o1-cr)--++(180:0.4);
\end{tikzpicture}\;.
\end{equation}
Eq.~\eqref{eq:h_invariant} in this block-diagonal basis then becomes
\begin{equation}
\label{eq:block_diagonal_invariance}
\begin{tikzpicture}
\atoms{square,yscale={0.4/0.36},lab={t=$\widetilde H$,p=-90:0.5}}{h/}
\atoms{delta}{d/p={-1,0}}
\draw[irrep] (h-r)--++(0:0.4) (h-l)--(d)--++(180:0.6) (d)--++(90:0.6);
\draw[pind,rounded corners=5pt] (h-tr)--++(0:0.4) (h-tl)-|(-0.8,0.6) (-1.2,0.6)|-(-1.6,0.2);
\draw[qind] (h-br)--++(0:0.4) (h-bl)--(-1.6,-0.2);
\end{tikzpicture}
=
\begin{tikzpicture}
\atoms{square,yscale={0.4/0.36},lab={t=$\widetilde H$,p=-90:0.5}}{h/}
\atoms{delta}{d/p={1,0}}
\draw[irrep] (h-l)--++(180:0.4) (h-r)--(d)--++(0:0.6) (d)--++(90:0.6);
\draw[pind,rounded corners=5pt] (h-tl)--++(180:0.4) (h-tr)-|(0.8,0.6) (1.2,0.6)|-(1.6,0.2);
\draw[qind] (h-bl)--++(180:0.4) (h-br)--(1.6,-0.2);
\end{tikzpicture}\;.
\end{equation}
By contracting the left two open internal indices on the left-hand side with copies of a normalized vector $v$ and applying a $\delta$-tensor to the left two irrep indices, we find
\begin{equation}
\label{eq:manybody_h_from_bdiag}
\begin{tikzpicture}
\atoms{square,yscale={0.4/0.36},lab={t=$\widetilde H$,p=-90:0.5}}{h/}
\draw[irrep] (h-r)--++(0:0.4) (h-l)--++(180:0.4);
\draw[pind] (h-tr)--++(0:0.4) (h-tl)--++(180:0.4);
\draw[qind] (h-br)--++(0:0.4) (h-bl)--++(180:0.4);
\end{tikzpicture}
=
\begin{tikzpicture}
\atoms{square,yscale={0.4/0.36},lab={t=$\widetilde H$,p=-90:0.5}}{h/}
\atoms{delta}{d/p={-1,0},d1/p={-2.1,0.8}}
\atoms{circ,small}{{v0/p={-1.2,0.6},lab={t=$v$,p=180:0.25}},{v1/p={-1.8,0.2},lab={t=$v^*$,p=90:0.25}}}
\draw[irrep,rc] (h-r)--++(0:0.4) (h-l)--(d) (d)|-(d1) (d)-|(d1) (d1)--++(180:0.4);
\draw[pind,rc] (h-tr)--++(0:0.4) (h-tl)-|(-0.8,0.6) (v0)|-(v1);
\draw[qind] (h-br)--++(0:0.4) (h-bl)--(-1.6,-0.2);
\end{tikzpicture}
\overset{\text{Eq.~\eqref{eq:block_diagonal_invariance}}}{=}
\begin{tikzpicture}
\atoms{square,yscale={0.4/0.36},lab={t=$\widetilde H$,p=-90:0.5}}{h/}
\atoms{delta}{d/p={0,0.5},d1/p={0,0.8}}
\atoms{circ,small}{{v0/p={-0.75,0.2},lab={t=$v^*$,p=180:0.3}},{v1/p={0.75,0.2},lab={t=$v$,p=0:0.25}}}
\draw[irrep,rc] (h-r)--++(0.25,0)|-(d) (h-l)--++(-0.25,0)|-(d) (d)--(d1) (d1)--++(180:0.8)(d1)--++(0:0.8);
\draw[pind] (h-tl)--(v0) (h-tr)--(v1) (-0.8,1)--(0.8,1);
\draw[qind] (h-br)--++(0:0.6) (h-bl)--++(180:0.6);
\end{tikzpicture}
=
\begin{tikzpicture}
\atoms{square,lab={t=$\widetilde h$,p=-90:0.5}}{h/}
\atoms{delta}{d1/p={0,0.4}}
\draw[irrep] (h-t)--(d1) (d1)--++(180:0.8)(d1)--++(0:0.8);
\draw[pind] (-0.8,0.6)--(0.8,0.6);
\draw[qind] (h-r)--++(0:0.6) (h-l)--++(180:0.6);
\end{tikzpicture}\;,
\end{equation}
where
\begin{equation}
\begin{tikzpicture}
\atoms{square,lab={t=$\widetilde h$,p=-90:0.5}}{h/}
\draw[irrep] (h-t)--++(0,0.4);
\draw[qind] (h-r)--++(0:0.4) (h-l)--++(180:0.4);
\end{tikzpicture}
\coloneqq
\begin{tikzpicture}
\atoms{square,yscale={0.4/0.36},lab={t=$\widetilde H$,p=-90:0.5}}{h/}
\atoms{delta}{d/p={0,0.5}}
\atoms{circ,small}{{v0/p={-0.75,0.2},lab={t=$v^*$,p=180:0.3}},{v1/p={0.75,0.2},lab={t=$v$,p=0:0.25}}}
\draw[irrep,rc] (h-r)--++(0.25,0)|-(d) (h-l)--++(-0.25,0)|-(d) (d)--++(0,0.4);
\draw[pind] (h-tl)--(v0) (h-tr)--(v1);
\draw[qind] (h-br)--++(0:0.6) (h-bl)--++(180:0.6);
\end{tikzpicture}\;.
\end{equation}
Note that $v$ is actually a collection of vectors, depending on the value $\lambda$ of the irrep. Plugging in Eq.~\eqref{eq:hamiltonian_block_diagonal} and Eq.~\eqref{eq:H_from_H}, we find
\begin{equation}
\label{eq:htilde_definition}
\begin{tikzpicture}
\atoms{square,lab={t=$\widetilde h$,p=-90:0.5}}{h/}
\draw[irrep] (h-t)--++(0,0.4);
\draw[qind] (h-r)--++(0:0.4) (h-l)--++(180:0.4);
\end{tikzpicture}
=
\begin{tikzpicture}
\atoms{square,lab={t=$h$,p=-90:0.35}}{h/}
\atoms{triang,rot=180,lab={t=$F$,p=-30:0.3}}{f/p={0,0.6}}
\draw[irrep] (f-mb)--++(90:0.4);
\draw[qind] (f-cl)--++(30:0.4) (f-cr)--++(150:0.4);
\draw (h-t)--(f-ct);
\end{tikzpicture}
\end{equation}
with
\begin{equation}
\label{eq:f_from_a}
\begin{tikzpicture}
\atoms{triang,rot=180,lab={t=$F$,p=-30:0.3}}{f/p={0,0.6}}
\draw[irrep] (f-mb)--++(90:0.4);
\draw[qind] (f-cl)--++(30:0.4) (f-cr)--++(150:0.4);
\draw (f-ct)--++(-90:0.4);
\end{tikzpicture}
\coloneqq
\begin{tikzpicture}
\atoms{square,lab={t=$A$,p=-55:0.35}}{a/}
\atoms{triang}{{s0/rot=90,lab={t=$S$,p=120:0.3},p={0.7,0}},{s1/rot=-90,lab={t=$S^*$,p=60:0.35},p={-0.7,0}}}
\atoms{delta}{d/p={0,0.5}}
\atoms{circ,small}{{v0/p={-1.6,0.2},lab={t=$v^*$,p=180:0.3}},{v1/p={1.6,0.2},lab={t=$v$,p=0:0.25}}}
\draw[irrep,rc] (s0-mb)--++(0.25,0)|-(d) (s1-mb)--++(-0.25,0)|-(d) (d)--++(0,0.4);
\draw[pind] (s1-cl)--(v0) (s0-cr)--(v1);
\draw[qind] (s0-cl)--++(0:0.6) (s1-cr)--++(180:0.6);
\draw[nqubit] (a-l)edge[mark={arr,s}](s1-ct) (a-r)--(s0-ct);
\draw (a-b)--++(-90:0.4);
\end{tikzpicture}\;.
\end{equation}
$F$ is in fact the isomorphism $O$ for the algebra $X$. The tensor $\widetilde h$ can be interpreted as a collection of matrices $\widetilde h^\lambda$ for different irreps $\lambda$. As can be easily seen from Eq.~\eqref{eq:manybody_h_from_bdiag}, the ground state energy of $H$ is the minimum over $\lambda$ of the ground state energies of $\widetilde h^\lambda$. Furthermore, the ground states are given by
\begin{equation}
\label{eq:ground_state_appendix}
\begin{tikzpicture}
\atoms{square}{{p/lab={t=$\psi$,p=-90:0.4}}}
\draw[nqubit] (p-r)--++(0:0.3);
\end{tikzpicture}
=
\begin{tikzpicture}
\atoms{triang}{{s/rot=-90,lab={t=$S$,p=60:0.3}}}
\atoms{circ,small}{{v/p={-0.6,0.3},lab={t=$v$,p=180:0.3}}}
\atoms{circ,small}{{l/p={-0.6,0},lab={t=$\lambda_{\text{min}}$,p=180:0.5}}}
\atoms{circ,small}{{sv/p={-0.6,-0.3},lab={t=$s_{\text{min}}$,p=180:0.5}}}
\draw[irrep] (s-mb)--(l);
\draw[pind] (s-cl)--(v);
\draw[qind] (s-cr)--(sv);
\draw[nqubit] (s-ct)--++(0:0.4);
\end{tikzpicture}\;,
\end{equation}
where $s_{\text{min}}$ is the ground state of $\widetilde h^{\lambda_{\text{min}}}$, and $v$ is an arbitrary vector. Of course, there cannot be an efficient classical description which works for all ground states, since the dimension of $v$ (and therefore the ground state degeneracy) is exponentially large in $n$. However, as we will later give a low-bond dimension MPS description for a full basis of ground states in Eq.~\eqref{eq:ground_state_mps}.

Finally, let us consider the time evolution of a pure state $\ket\psi$ under an invariant unitary $U$. More precisely, we want to calculate the time-evolved expectation value $\bra\psi UOU^\dagger\ket\psi$ of an invariant operator $O$. We find
\begin{equation}
\label{eq:time_evolution_blockdiagonal}
\bra\psi UOU^\dagger\ket\psi
=
\begin{tikzpicture}
\atoms{square}{{p1/p={-1.4,0},lab={t=$\psi^*$,p=-90:0.4}}, {p2/p={1.4,0},lab={t=$\psi$,p=-90:0.4}}}
\atoms{square}{{u1/p={-0.7,0},lab={t=$U$,p=-90:0.4}}, {u2/p={0.7,0},lab={t=$U^*$,p=-90:0.4}}}
\atoms{square,lab={t=$O$,p=-90:0.4}}{o/}
\draw[nqubit] (p1)--(u1)--(o)--(u2)--(p2);
\end{tikzpicture}
=
\begin{tikzpicture}
\atoms{square}{{p1/p={-2.1,0},lab={t=$\psi^*$,p=-90:0.4}}, {p2/p={2.1,0},lab={t=$\psi$,p=-90:0.4}}}
\atoms{square}{{u1/p={-0.7,1.1},lab={t=$u$,p=90:0.3}}, {u2/p={0.7,1.1},lab={t=$u^*$,p=90:0.35}}}
\atoms{square,lab={t=$o$,p=90:0.3}}{o/p={0,1.1}}
\atoms{triang}{{f1/p={-0.7,0.4},lab={t=$F$,p=30:0.3}},{f2/p={0,0.4},lab={t=$F$,p=30:0.3}},{f3/p={0.7,0.4},lab={t=$F^*$,p=30:0.35}}}
\atoms{delta}{d1/p={-0.7,0},d2/p={0,0},d3/p={0.7,0}}
\atoms{triang}{{s0/rot=90,lab={t=$S^*$,p=120:0.3},p={-1.4,0}},{s1/rot=-90,lab={t=$S$,p=60:0.3},p={1.4,0}}}
\draw[nqubit] (p1)--(s0-ct) (p2)--(s1-ct);
\draw[irrep] (s0-mb)--(d1)--(d2)--(d3)--(s1-mb) (d1)--(f1-mb) (d2)--(f2-mb) (d3)--(f3-mb);
\draw[pind] (s0-cl)--(s1-cr);
\draw[qind] (s0-cr)--(f1-cl) (f1-cr)--(f2-cl) (f2-cr)--(f3-cl) (f3-cr)--(s1-cl);
\draw (u1-b)--(f1-ct) (u2-b)--(f3-ct) (o-b)--(f2-ct);
\end{tikzpicture}\;.
\end{equation}
As usual, we cannot efficiently contract this tensor network, since the internal $p$-index and the $n$-qubit indices have an exponential bond dimension in $n$. However, once we have the reduced density matrix
\begin{equation}
\label{eq:reduced_matrix}
\begin{tikzpicture}
\atoms{square,lab={t=$\widetilde\rho$,p=-90:0.4}}{0/}
\draw[irrep] (0-t)edge[ind=$\lambda$]++(90:0.5);
\draw[qind] (0-tl)edge[ind=$q'$]++(120:0.5) (0-tr)edge[ind=$q$]++(60:0.5);
\end{tikzpicture}
\coloneqq
\begin{tikzpicture}
\atoms{square}{{p1/p={-1.5,0},lab={t=$\psi^*$,p=-90:0.4}}, {p2/p={1.5,0},lab={t=$\psi$,p=-90:0.4}}}
\atoms{delta}{d/}
\atoms{triang}{{s0/rot=90,lab={t=$S^*$,p=120:0.3},p={-0.8,0}},{s1/rot=-90,lab={t=$S$,p=60:0.3},p={0.8,0}}}
\draw[nqubit] (p1)--(s0-ct) (p2)--(s1-ct);
\draw[irrep] (s0-mb)--(d)--(s1-mb) (d)edge[ind=$\lambda$]++(90:0.8);
\draw[pind] (s0-cl)--(s1-cr);
\draw[qind,rc,ind=$q'$] (s0-cr)-|(-0.2,0.8);
\draw[qind,rc,ind=$q$] (s1-cl)-|(0.2,0.8);
\end{tikzpicture}\;,
\end{equation}
the remaining contraction is fast, dominated by the contraction between $F$ and $u$ or $o$ which takes time $\operatorname{O}(n^6)$. If $\psi$ is given as many copies of the quantum state, then $\widetilde\rho$ can be obtained by tomography on the $\lambda$ and $q$-subspace, as described in the main text. If $\psi$ is given classically as an MPS, then we show later around Eq.~\eqref{eq:mps_time_evolution} how time evolution can be performed efficiently.

Note that time evolution can also be performed if $U$ is defined through an invariant Hamiltonian $H=A(h)$ via $U=e^{itH}$. In this case, we calculate $\widetilde h$ as in Eq.~\eqref{eq:htilde_definition}, exponentiate the individual $\widetilde h^\lambda$, and plug $e^{it\widetilde h}$ into Eq.~\eqref{eq:time_evolution_blockdiagonal} instead of $u$ and $F$.

\section{Tensor-network method for calculation of the entries of $F$}
\label{sec:tensor_network_f_entries}
In this Appendix, we show how to efficiently calculate the explicit matrix $F$ defined in Eq.~\eqref{eq:f_from_a}. We cannot directly efficiently contract the tensor network on the right-hand side since the two indices shared between $A$ and $S$ have bond dimension $2^n$. In order to make contraction efficient, we will write $A$ and $S$ as MPOs and then contract the tensor network horizontally from qubit to qubit.
Let us start by writing $A$ as an MPO,
\begin{equation}
\label{eq:a_mpo}
\begin{tikzpicture}
\atoms{square}{{0/lab={t=$A$,p=-90:0.35}}}
\draw[nqubit] (0)edge[ind=$\vec l$,mark={arr,f,s}]++(-0.8,0) (0)edge[ind=$\vec m$]++(0.8,0);
\draw (0)edge[triple,ind=$\vec i$]++(90:0.7);
\end{tikzpicture}
=
\sqrt{
\frac{i_0!i_x!i_y!i_z!}{n!2^n}
}
\cdot
\begin{tikzpicture}
\atoms{square,xscale=1,yscale=2.5,lab={t=$+$,p={0,0}}}{0/,1/p={0.7,0},2/p={2.1,0}}
\atoms{circ,tiny,lab={t=$0$,p=180:0.25}}{0x/p={-0.5,-0.3},0y/p={-0.5,0},0z/p={-0.5,0.3}}
\draw ([sy=-0.3]2-r)edge[ind=$i_x$]++(0:0.4) (2-r)edge[ind=$i_y$]++(0:0.4) ([sy=0.3]2-r)edge[ind=$i_z$]++(0:0.4);
\draw ([sy=-0.3]1-r)edge[mark={three dots,a}]++(0:0.3) (1-r)edge[mark={three dots,a}]++(0:0.3) ([sy=0.3]1-r)edge[mark={three dots,a}]++(0:0.3);
\draw ([sy=-0.3]2-l)edge[]++(180:0.3) (2-l)edge[]++(180:0.3) ([sy=0.3]2-l)edge[]++(180:0.3);
\draw ([sy=-0.3]0-r)--([sy=-0.3]1-l) (0-r)--(1-l) ([sy=0.3]0-r)--([sy=0.3]1-l);
\draw (0x)--([sy=-0.3]0-l) (0y)--(0-l) (0z)--([sy=0.3]0-l);
\draw (0-t)edge[ind=$l_0$]++(90:0.4) (1-t)edge[ind=$l_1$]++(90:0.4) (2-t)edge[ind=$l_{n-1}$]++(90:0.4);
\draw (0-b)edge[ind=$m_0$]++(-90:0.4) (1-b)edge[ind=$m_1$]++(-90:0.4) (2-b)edge[ind=$m_{n-1}$]++(-90:0.4);
\end{tikzpicture}\;.
\end{equation}
The global prefactor on the right-hand side was not included in the definition of $A$ in the main text, but is necessary for the normalization in Eq.~\eqref{eq:a_normalization}. The arrows over the indices on the left-hand side indicate that they correspond to collections of indices on the right-hand side. We have also used a triple-line for the $X$-index $\vec i$, to indicate that it is labelled by a triple $(i_x,i_y,i_z)$. The bond dimension of all three horizontal indices is $n$, and the tensor with the $+$ label is defined as
\begin{equation}
\begin{tikzpicture}
\atoms{square,xscale=1,yscale=2.5,lab={t=$+$,p={0,0}}}{0/}
\draw ([sy=-0.3]0-r)edge[ind=$j_x$]++(0:0.4) (0-r)edge[ind=$j_y$]++(0:0.4) ([sy=0.3]0-r)edge[ind=$j_z$]++(0:0.4);
\draw ([sy=-0.3]0-l)edge[ind=$i_x$]++(180:0.4) (0-l)edge[ind=$i_y$]++(180:0.4) ([sy=0.3]0-l)edge[ind=$i_z$]++(180:0.4);
\draw (0-t)edge[ind=$l$]++(90:0.4) (0-b)edge[ind=$m$]++(-90:0.4);
\end{tikzpicture}
=
\begin{multlined}
(\operatorname{id}_2)_{lm} \delta_{i_x,j_x} \delta_{i_y,j_y} \delta_{i_z,j_z}
+(\sigma_x)_{lm} \delta_{i_x,j_x+1} \delta_{i_y,j_y} \delta_{i_z,j_z}\\
+(\sigma_y)_{lm} \delta_{i_x,j_x} \delta_{i_y,j_y+1} \delta_{i_z,j_z}
+(\sigma_z)_{lm} \delta_{i_x,j_x} \delta_{i_y,j_y} \delta_{i_z,j_z+1}
\end{multlined}\;.
\end{equation}
Note that the value of $\delta_{a,b}$ is understood to be zero unless $0\leq a,b<n$. We can see that the value of the bottom, middle, and top horizontal index in Eq.~\eqref{eq:a_mpo} is the count of $X$s, $Y$s, and $Z$s up to this point, respectively. To achieve this, the value of the bottom horizontal indices of a $+$ tensor increases by $1$ from left to right if the two vertical indices are in an $X$ configuration, and analogously for the middle and top horizontal indices. 

In order to write $S$ as an MPO, we make use of the fact that the commutant of the representation $R$ of $S_n$ is spanned by the representation $(\frac12)^{\otimes n}$ of $SU(2)$. Thus, the isomorphism $S$ is the same for both those representations, just that the internal and multiplicity indices are exchanged. Now, the representation $(\frac12)^{\otimes n}$ is a tensor product of on-site representations, and thus an MPO,
\begin{equation}
\label{eq:onsite_representation}
\begin{tikzpicture}
\atoms{square}{{0/lab={t=$(\frac12)^{\otimes n}$,p=-90:0.45}}}
\draw[nqubit] (0)edge[ind=$\vec l$]++(-0.8,0) (0)edge[ind=$\vec m$]++(0.8,0);
\draw (0)edge[ind=$u$]++(90:0.5);
\end{tikzpicture}
=
\begin{tikzpicture}
\atoms{square}{{r0/lab={t=$r_0$,p=180:0.35}},{r1/p={1,0},lab={t=$r_1$,p=180:0.35}},{r2/p={3,0},lab={t=$r_{n-1}$,p=180:0.55}}}
\atoms{delta}{d0/p={1.4,0.4}, d1/p={3.4,0.4}}
\draw[rc] (r0-r)-|(0.4,0.4)--(d0) (d0)edge[mark={three dots,a}]++(0:0.3) (d1)edge[mark={three dots,a}]++(180:0.6) (d1)edge[ind=$u$]++(0.5,0) (r1)-|(d0) (r2)-|(d1);
\draw (r0-t)edge[ind=$l_0$]++(90:0.5) (r1-t)edge[ind=$l_1$]++(90:0.5) (r2-t)edge[ind=$l_{n-1}$]++(90:0.5) (r0-b)edge[ind=$m_0$]++(-90:0.3) (r1-b)edge[ind=$m_1$]++(-90:0.3) (r2-b)edge[ind=$m_{n-1}$]++(-90:0.3);
\end{tikzpicture}
\;.
\end{equation}
Note that $u$ labels an element of $SU(2)$, so this is a tensor network with a continuous bond dimension, which might not be directly suitable for practical computation but still makes sense formally. In our case all the $r_i$ are the spin-$\frac12$ representation, but we will discuss the following few steps for arbitrary representations. In order to get the isomorphism $S$ for the overall representation (for us $(\frac12)^{\otimes n}$), we start by block-diagonalizing the individual $r_i$. We end up with many copies of the isomorphism $O$ for the group algebra of $SU(2)$. Those copies of $O$ need to be pulled through all the $\delta$-tensors. When doing this, the $\delta$-tensor splits up into two copies of a tensor $C$,
\begin{equation}
\begin{tikzpicture}
\atoms{delta}{0/}
\atoms{triang}{{o0/p=-45:0.8,rot=45,lab={t=$O$,p=90:0.35}},{o1/p=-135:0.8,rot=-45,lab={t=$O$,p=90:0.35}},{o2/p={0,0.8},rot=180,lab={t=$O^*$,p=-30:0.35}}}
\draw (0)--(o0-ct) (0)--(o1-ct) (0)--(o2-ct);
\draw[irrep] (o0-mb)--++(-45:0.4) (o1-mb)--++(-135:0.4) (o2-mb)--++(90:0.4);
\draw[qind] (o0-cl)--++(-45:0.4) (o1-cl)--++(-135:0.4) (o2-cl)--++(90:0.4) (o0-cr)--++(-45:0.4) (o1-cr)--++(-135:0.4) (o2-cr)--++(90:0.4);
\end{tikzpicture}
=
\begin{tikzpicture}
\atoms{circ,dot,big}{{0/p={-0.5,0},lab={t=$C^*$,p=180:0.5}},{1/p={0.5,0},lab={t=$C$,p=0:0.45}}}
\atoms{delta}{d0/p={-0.5,-1}, d1/p={0.5,-1}, d2/p={0,0.5}}
\draw[irrep] (0)--(d0) (0)--(d1) (0)--(d2) (1)--(d0) (1)--(d1) (1)--(d2) (d0)--++(-90:0.4) (d1)--++(-90:0.4) (d2)--++(90:0.4);
\draw[rc,qind] (0.-120)--++(-90:1.2) (1.-60)--++(-90:1.2) (0.-75)--++(-45:1.1)--++(-90:0.35) (1.-105)--++(-135:1.1)--++(-90:0.35) (0.75)--++(45:0.4)--++(90:0.35) (1.105)--++(135:0.4)--++(90:0.35);
\draw[fusion] (0)--(1);
\end{tikzpicture}\;.
\end{equation}
$C$ exists since the group algebra together with the $\delta$-tensor forms a Hopf algebra. When writing the bi-algebra axiom in the block-diagonal basis, one can see that the left-hand side for a fixed configuration of irrep indices is a projector $P$ between all three left internal indices to all three right internal indices. So $C$ can be defined as an isometry such that $CC^\dagger=P$. The index connecting the two copies of $C$ has a bond dimension which depends on the three irreps, and will be referred to as \emph{fusion index}. The entries of $C$ are known as the \emph{Clebsch-Gordon coefficients}. We can now use this equation to pull the isomorphism $O$ through all the $\delta$-tensors in Eq.~\eqref{eq:onsite_representation}. As a result, we obtain $S$ as
\begin{equation}
\begin{tikzpicture}
\atoms{triang,rot=90,lab={t=$S$,p=-120:0.35}}{0/}
\draw[irrep] (0-mb)edge[ind=$\lambda$]++(0:0.5);
\draw[nqubit] (0-ct)edge[ind=$\vec l$]++(180:0.5);
\draw[qind] (0-cr)edge[ind=$q$]++(10:0.5);
\draw (0-cl)edge[pind,postaction={ind=$\vec p$}]++(-10:0.5);
\end{tikzpicture}
=
\begin{tikzpicture}
\atoms{circ,dot,big}{0/, 1/p={1.5,0}, 2/p={4,0}}
\atoms{delta}{d0/p={0.7,0}, d1/p={2.2,0}}
\atoms{triang,rot=180}{{s0/p={0,-0.8},lab={t=$s_0$,p=-30:0.3}}, {s1/p={1.5,-0.8},lab={t=$s_1$,p=-30:0.3}}, {s2/p={4,-0.8},lab={t=$s_{n-1}$,p=-20:0.5}}}
\atoms{circ,tiny,lab={t=$0$,p=180:0.25}}{c0/p={-0.6,0}}
\draw (0)edge[fusion,postaction={ind=$p_0''$}]++(90:0.6) (1)edge[fusion,postaction={ind=$p_1''$}]++(90:0.6) (2)edge[fusion,postaction={ind=$p_{n-1}''$}]++(90:0.8);
\draw[irrep] (0)--(c0) (0)--(d0) (d0)--(1) (1)--(d1) (d1)edge[mark={three dots,a}]++(0:0.3) (2)edge[mark={three dots,a}]++(180:0.4)  (2.30)edge[ind=$\lambda$]++(0:0.7) (d0)edge[ind=$p_0$]++(90:0.5) (d1)edge[ind=$p_1$]++(90:0.5) (s0-mb)--(0) (s1-mb)--(1) (s2-mb)--(2);
\draw[smallmultip,ind=$p_0'$] (s0-cl)--++(60:0.3)--++(90:1);
\draw[smallmultip,ind=$p_1'$] (s1-cl)--++(60:0.3)--++(90:1);
\draw[smallmultip,ind=$p_{n-1}'$] (s2-cl)--++(60:0.5)--++(90:0.6);
\draw[qind] (s0-cr)--++(90:0.2)--(0) (s1-cr)--++(90:0.2)--(1) (s2-cr)--++(90:0.2)--(2) (0.-30)--(1.-150);
\draw[qind] (1.-30)edge[postaction={mark={three dots,a}}]++(0:0.7) (2.-150)edge[postaction={mark={three dots,a}}]++(180:0.4) (2.-30)edge[postaction={ind=$q$}]++(0:0.75);
\draw (s0-ct)edge[ind=$l_0$]++(-90:0.3) (s1-ct)edge[ind=$l_1$]++(-90:0.3) (s2-ct)edge[ind=$l_{n-1}$]++(-90:0.3);
\end{tikzpicture}
\;.
\end{equation}
For our present situation where the group is $SU(2)$, the Clebsch-Gordon coefficients are well-known and can be computed efficiently up to $p$ bits of precision (i.e., up to additive error exponentially small in $p$) in time $\operatorname{poly}\left(n,p\right)$ by the Racah formula~\cite{PhysRev.62.438}. Note that the fusion index is always trivial, that is, it has dimension either $0$ or $1$. All local representations $r_i$ are equal to spin-$\frac12$ such that all the local $s_i$ can be chosen equal to the identity. With those simplifications, we get,
\begin{equation}
\label{eq:s_mpo_simplified}
\begin{tikzpicture}
\atoms{triang,rot=90,lab={t=$S$,p=-120:0.35}}{0/}
\draw[irrep] (0-mb)edge[ind=$\lambda$]++(0:0.5);
\draw[nqubit] (0-ct)edge[ind=$\vec l$]++(180:0.5);
\draw[qind] (0-cr)edge[ind=$q$]++(10:0.5);
\draw (0-cl)edge[pind,postaction={ind=$\vec p$}]++(-10:0.5);
\end{tikzpicture}
=
\begin{tikzpicture}
\atoms{circ,dot,big}{0/, 1/p={1.5,0}, 2/p={4,0}}
\atoms{delta}{d0/p={0.7,0}, d1/p={2.2,0}}
\atoms{circ,tiny,lab={t=$\frac12$,p=0:0.25}}{{s0/p={0,-0.6}},{s1/p={1.5,-0.6}},{s2/p={4,-0.6}}}
\atoms{circ,tiny,lab={t=$0$,p=180:0.25}}{c0/p={-0.6,0}}
\draw[irrep] (0)--(c0) (0)--(d0) (d0)--(1) (1)--(d1) (d1)edge[mark={three dots,a}]++(0:0.3) (2)edge[mark={three dots,a}]++(180:0.4)  (2.30)edge[ind=$\lambda$]++(0:0.7) (d0)edge[ind=$p_0$]++(90:0.5) (d1)edge[ind=$p_1$]++(90:0.5) (s0)--(0) (s1)--(1) (s2)--(2);
\draw[qind] (0.-120)edge[ind=$l_0$]++(-90:0.8) (1.-120)edge[ind=$l_1$]++(-90:0.8) (2.-120)edge[ind=$l_{n-1}$]++(-90:0.8) (0.-30)--(1.-150);
\draw[qind] (1.-30)edge[postaction={mark={three dots,a}}]++(0:0.7) (2.-150)edge[postaction={mark={three dots,a}}]++(180:0.4) (2.-30)edge[postaction={ind=$q$}]++(0:0.75);
\end{tikzpicture}
\;.
\end{equation}
The tensor product of the spin-$\frac{c}{2}$ representation with the spin-$\frac12$ representation is $\frac{c}{2}\otimes \frac12 = \frac{c-1}{2}\oplus\frac{c+1}{2}$. Thus, any two consecutive irreps $p_i,p_{i+1}$ in the sequence $p_0,\ldots,p_{n-1}=\lambda$ must differ by $\pm \frac12$, otherwise the MPO above yields $0$. Such a sequence exists if and only if $(2\lambda)\mod 2=n \mod 2$ and $\lambda\leq \frac{n}{2}$. In fact, one can directly see the correspondence between such sequences $\vec p$ and the Young tableaux for the Young diagram with two rows of lengths $(n/2+\lambda, n/2-\lambda)$: Starting at an empty diagram, we fill its fields consecutively with the numbers $0,\ldots,n-1$. In the $i$th step, the number $i$ is appended to the first row if $p_i=p_{i-1}+\frac12$, and added to the second row if $p_i=p_{i-1}-\frac12$. Note that a representation of the Schur-Weyl basis equivalent to this MPO has also been used in Ref.~\cite{PhysRevLett.121.060505}, and was originally presented in Ref.~\cite{harrowschur}.

Let us now plug the derived MPO representations in Eq.~\eqref{eq:s_mpo_simplified} and Eq.~\eqref{eq:a_mpo} into Eq.~\eqref{eq:f_from_a}. For the vector $v$, we choose a fixed basis vector corresponding to a valid sequence $\vec p=(p_0,\ldots,p_{n-1})$ depending on $\lambda$,
\begin{equation}
\begin{tikzpicture}
\atoms{triang,rot=180,lab={t=$F$,p=-30:0.3}}{f/p={0,0.6}}
\draw[irrep] (f-mb)edge[ind=$\lambda$]++(90:0.4);
\draw[qind] (f-cl)edge[ind=$q$]++(30:0.4) (f-cr)edge[ind=$q'$]++(150:0.4);
\draw (f-ct)edge[triple,ind=$\vec i$]++(-90:0.4);
\end{tikzpicture}
=
\begin{tikzpicture}
\atoms{square,yscale=1.3,lab={t=$+$,p={0,0}}}{p0/p={-0.2,0},p1/p={1.3,0},p2/p={3.8,0}}
\atoms{circ,dot,big,lab={t=$*$,p=60:0.4}}{0/p={0,1}, 1/p={1.5,1}, 2/p={4,1}}
\atoms{circ,dot,big}{0x/p={0,-1}, 1x/p={1.5,-1}, 2x/p={4,-1}}
\atoms{delta}{d0/p={0.7,1}, d1/p={2.2,1}, d0x/p={0.7,-1}, d1x/p={2.2,-1}}
\atoms{circ,tiny}{{pp0/p={0.7,1.4},lab={t=$p_0$,p=90:0.25}}, {pp1/p={2.2,1.4},lab={t=$p_1$,p=90:0.25}}, {pp0x/p={0.7,-1.4},lab={t=$p_0$,p=-90:0.25}}, {pp1x/p={2.2,-1.4},lab={t=$p_1$,p=-90:0.25}}}
\atoms{circ,tiny,lab={t=$\frac12$,p=0:0.25}}{{s0/p={0,0.5}},{s1/p={1.5,0.5}},{s2/p={4,0.5}}}
\atoms{circ,tiny,lab={t=$\frac12$,p=0:0.25}}{{s0x/p={0,-0.5}},{s1x/p={1.5,-0.5}},{s2x/p={4,-0.5}}}
\atoms{circ,tiny,lab={t=$\lambda$,p=0:0.25}}{{l/p={4.6,1.1}},{lx/p={4.6,-1.1}}}
\atoms{circ,tiny,lab={t=$0$,p=180:0.25}}{c0/p={-0.6,1}, c0x/p={-0.6,-1}}
\atoms{circ,tiny,lab={t=$000$,p=180:0.4}}{c00/p={-0.8,0}}
\draw[irrep] (0)--(c0) (0)--(d0) (d0)--(1) (1)--(d1) (d1)edge[mark={three dots,a}]++(0:0.3) (2)edge[mark={three dots,a}]++(180:0.4)  (2)--(l) (d0)--(pp0) (d1)--(pp1) (s0)--(0) (s1)--(1) (s2)--(2);
\draw[qind] (0.-120)--(p0-t) (p0-b)--(0x.120) (1.-120)--(p1-t) (p1-b)--(1x.120) (2.-120)--(p2-t) (p2-b)--(2x.120) (0.-30)--(1.-150) (0x.30)--(1x.150);
\draw[qind] (1.-30)edge[postaction={mark={three dots,a}}]++(0:0.7) (2.-150)edge[postaction={mark={three dots,a}}]++(180:0.4) (2.-30)edge[postaction={ind=$q'$}]++(0:0.75);
\draw[irrep] (0x)--(c0x) (0x)--(d0x) (d0x)--(1x) (1x)--(d1x) (d1x)edge[mark={three dots,a}]++(0:0.3) (2x)edge[mark={three dots,a}]++(180:0.4)  (2x)--(lx) (d0x)--(pp0x) (d1x)--(pp1x) (s0x)--(0x) (s1x)--(1x) (s2x)--(2x);
\draw[qind] (1x.30)edge[postaction={mark={three dots,a}}]++(0:0.7) (2x.150)edge[postaction={mark={three dots,a}}]++(180:0.4) (2x.30)edge[postaction={ind=$q$}]++(0:0.75);
\draw (p0-r)edge[triple](p1-l) (p1-r)edge[triple,mark={three dots,a}]++(0:0.6) (p2-l)edge[triple,mark={three dots,a}]++(180:0.4) (p2-r)edge[triple,ind=$\vec i$]++(0:0.5) (p0-l)edge[triple](c00);
\end{tikzpicture}
\;.
\end{equation}
The maximal possible value of the $p_i$ is $\operatorname{O}(n)$, and thus the bond dimension of the horizontal $q$-indices is $\operatorname{O}(n)$ as well. We can already see that contracting the tensor network from the left to the right has polynomial runtime. The contraction becomes even faster by the following considerations. For a fixed $\lambda'=\lambda\pm \frac12$, consider the Clebsch-Gordon tensor,
\begin{equation}
\begin{tikzpicture}
\atoms{circ,dot,big}{0/}
\atoms{circ,tiny,lab={t=$\frac12$,p=0:0.25}}{s/p={0,-0.5}}
\draw[irrep] (0)--(s) (0)edge[ind=$\lambda$]++(180:0.6) (0)edge[ind=$\lambda'$]++(0:0.6);
\draw[qind] (0.-120)edge[ind=$l$]++(-90:0.6) (0.150)edge[ind=$q$]++(180:0.8) (0.30)edge[ind=$q'$]++(0:0.8);
\end{tikzpicture}\;,
\end{equation}
as collection of matrices $M_{qq'}$ for different values of $l$. Those matrices are simultaneously constant-width block-diagonal. The $+$ tensor is also band-diagonal, and the bond dimension of the vertical indices is $\operatorname{O}(1)$. Since there are $5$ horizontal indices each of bond dimension $\operatorname{O}(n)$ and all tensors are band diagonal, adding one column of tensors to the contraction takes time $\operatorname{O}(n^5)$. Since there are $n$ qubits/contraction steps, as well as $\operatorname{O}(n)$ different values of $\lambda$, the total runtime for the contraction is $\operatorname{O}(n^7)$.

\section{Time evolution and ground state using MPS}
\label{sec:time_evo_appendix}
In this appendix, discuss the two algorithms from Theorems~\ref{thm:ground_state_mps} and~\ref{thm:time_evo_mps}, which are classically end to end using MPS as input or output, in more detail.
Let us start by Theorem~\ref{thm:ground_state_mps} giving an MPS description of the ground states of an invariant Hamiltonian by combining Eq.~\eqref{eq:ground_state_appendix} and Eq.~\eqref{eq:s_mpo_simplified}. A basis of ground states is labelled by valid sequences $p_0,\ldots,p_{n-1}=\lambda_{\text{min}}$. For every such sequence, the ground state is given as an MPS of bond dimension $\operatorname{O}(n)$,
\begin{equation}
\label{eq:ground_state_mps}
\begin{tikzpicture}
\atoms{square}{{p/lab={t=$\psi_{\text{min}}$,p=-90:0.4}}}
\draw[nqubit] (p-r)edge[ind=$\vec l$]++(0:0.4);
\end{tikzpicture}
=
\begin{tikzpicture}
\atoms{circ,dot,big}{0/, 1/p={1.5,0}, 2/p={4,0}}
\atoms{delta}{d0/p={0.7,0}, d1/p={2.2,0}}
\atoms{circ,tiny}{{pp0/p={0.7,0.5},lab={t=$p_0$,p=90:0.25}}, {pp1/p={2.2,0.5},lab={t=$p_1$,p=90:0.25}}}
\atoms{circ,tiny,lab={t=$\frac12$,p=0:0.25}}{{s0/p={0,-0.6}},{s1/p={1.5,-0.6}},{s2/p={4,-0.6}}}
\atoms{circ,tiny,lab={t=$0$,p=180:0.25}}{c0/p={-0.6,0}}
\atoms{circ,tiny,lab={t=$\lambda_{\text{min}}$,p=0:0.5}}{l/p={4.6,0.2}}
\atoms{circ,tiny,lab={t=$s_{\text{min}}$,p=0:0.5}}{s/p={4.6,-0.2}}
\draw[irrep] (0)--(c0) (0)--(d0) (d0)--(1) (1)--(d1) (d1)edge[mark={three dots,a}]++(0:0.3) (2)edge[mark={three dots,a}]++(180:0.4)  (2.30)--(l) (d0)--(pp0) (d1)--(pp1) (s0)--(0) (s1)--(1) (s2)--(2);
\draw[qind] (0.-120)edge[ind=$l_0$]++(-90:0.8) (1.-120)edge[ind=$l_1$]++(-90:0.8) (2.-120)edge[ind=$l_{n-1}$]++(-90:0.8) (0.-30)--(1.-150);
\draw[qind] (1.-30)edge[postaction={mark={three dots,a}}]++(0:0.7) (2.-150)edge[postaction={mark={three dots,a}}]++(180:0.4) (2.-30)--(s);
\end{tikzpicture}
\;.
\end{equation}

Next, let us discuss Theorem~\ref{thm:time_evo_mps}, which calculates the time-evolved expectation value giving an initial state $\psi$ as an MPS,
\begin{equation}
\begin{tikzpicture}
\atoms{square,lab={t=$\psi$,p=-90:0.3}}{0/}
\draw[nqubit] (0-r)edge[ind=$\vec l$]++(0:0.4);
\end{tikzpicture}
=
\begin{tikzpicture}
\atoms{square,small}{{a0/p={-0.2,1.4},lab={t=$A_0$,p=-90:0.3}}, {a1/p={1.3,1.4},lab={t=$A_1$,p=-90:0.3}}, {a2/p={3.8,1.4},lab={t=$A_{n-1}$,p=-90:0.3}}}
\draw[mps] (a0)--(a1) (a1)edge[mark={three dots,a}]++(0:0.9) (a2)edge[mark={three dots,a}]++(180:0.5);
\draw (a0-t)edge[ind=$l_0$]++(90:0.4) (a1-t)edge[ind=$l_1$]++(90:0.4) (a2-t)edge[ind=$l_{n-1}$]++(90:0.4);
\end{tikzpicture}\;,
\end{equation}
of bond dimension $\chi$. We first note that for $u$ representing an invariant unitary $U=A(u)$, $U\ket\psi$ is again an MPS of bond dimension $\chi \operatorname{O}(n^3)$ due to the MPO representation of $A$ in Eq.~\eqref{eq:a_mpo}. An invariant observable $O=A(o)$ can be written as an MPO as well, so the expectation value $\bra\psi UOU^\dagger\ket\psi$ can be evaluated as a product of MPS and MPO from left to right in polynomial time. However, the most efficient way to calculate that expectation value is to build on the method presented around Eq.~\eqref{eq:reduced_matrix}. The reduced density matrix $\widetilde\rho$ in Eq.~\eqref{eq:reduced_matrix} can be computed efficiently using the MPO representation of $S$ in Eq.~\eqref{eq:s_mpo_simplified},
\begin{equation}
\label{eq:mps_time_evolution}
\begin{tikzpicture}
\atoms{square,lab={t=$\widetilde\rho$,p=-90:0.4}}{0/}
\draw[irrep] (0-t)edge[ind=$\lambda$]++(90:0.5);
\draw[qind] (0-tl)edge[ind=$q'$]++(120:0.5) (0-tr)edge[ind=$q$]++(60:0.5);
\end{tikzpicture}
=
\begin{tikzpicture}
\atoms{circ,dot,big,lab={t=$*$,p=-60:0.4}}{0/p={0,0.5}, 1/p={1.5,0.5}, 2/p={4,0.5}}
\atoms{circ,dot,big}{0x/p={0,-0.5}, 1x/p={1.5,-0.5}, 2x/p={4,-0.5}}
\atoms{delta}{d0/p={0.7,0.5}, d1/p={2.2,0.5}, d0x/p={0.7,-0.5}, d1x/p={2.2,-0.5}, dm/p={4.8,0}}
\atoms{square,small}{{a0/p={-0.2,1.4},lab={t=$A_0^*$,p=90:0.3}}, {a1/p={1.3,1.4},lab={t=$A_1^*$,p=90:0.3}}, {a2/p={3.8,1.4},lab={t=$A_{n-1}^*$,p=90:0.3}}}
\atoms{square,small}{{a0x/p={-0.2,-1.4},lab={t=$A_0$,p=-90:0.3}}, {a1x/p={1.3,-1.4},lab={t=$A_1$,p=-90:0.3}}, {a2x/p={3.8,-1.4},lab={t=$A_{n-1}$,p=-90:0.3}}}
\atoms{circ,tiny,lab={t=$\frac12$,p=0:0.25}}{{s0/p={0,1}},{s1/p={1.5,1}},{s2/p={4,1}}}
\atoms{circ,tiny,lab={t=$\frac12$,p=0:0.25}}{{s0x/p={0,-1}},{s1x/p={1.5,-1}},{s2x/p={4,-1}}}
\atoms{circ,tiny,lab={t=$0$,p=180:0.25}}{c0/p={-0.6,0.5}, c0x/p={-0.6,-0.5}}
\draw[irrep] (0)--(c0) (0)--(d0) (d0)--(1) (1)--(d1) (d1)edge[mark={three dots,a}]++(0:0.3) (2)edge[mark={three dots,a}]++(180:0.4)  (d0)--(d0x) (d1)--(d1x) (s0)--(0) (s1)--(1) (s2)--(2);
\draw[qind] (0.120)--(a0) (a0x)--(0x.-120) (1.120)--(a1) (a1x)--(1x.-120) (2.120)--(a2) (a2x)--(2x.-120) (0.30)--(1.150) (0x.-30)--(1x.-150);
\draw[qind] (1.30)edge[postaction={mark={three dots,a}}]++(0:0.7) (2.150)edge[postaction={mark={three dots,a}}]++(180:0.4) (2.30)edge[postaction={ind=$q'$}]++(0:0.75);
\draw[irrep] (0x)--(c0x) (0x)--(d0x) (d0x)--(1x) (1x)--(d1x) (d1x)edge[mark={three dots,a}]++(0:0.3) (2x)edge[mark={three dots,a}]++(180:0.4) (s0x)--(0x) (s1x)--(1x) (s2x)--(2x);
\draw[qind] (1x.-30)edge[postaction={mark={three dots,a}}]++(0:0.7) (2x.-150)edge[postaction={mark={three dots,a}}]++(180:0.4) (2x.-30)edge[postaction={ind=$q$}]++(0:0.75);
\draw[irrep,rc] (2)--++(0:0.5)--(dm) (2x)--++(0:0.5)--(dm) (dm)edge[ind=$\lambda$]++(0.4,0);
\draw[mps] (a0)--(a1) (a1)edge[mark={three dots,a}]++(0:0.9) (a2)edge[mark={three dots,a}]++(180:0.5);
\draw[mps] (a0x)--(a1x) (a1x)edge[mark={three dots,a}]++(0:0.9) (a2x)edge[mark={three dots,a}]++(180:0.5);
\end{tikzpicture}\;.
\end{equation}
As usual, this tensor network can be contracted from the left to the right. There are five independent horizontal indices, two of them of bond dimension $\chi$, and three of bond dimension $\operatorname{O}(n)$. As argued above, the Clebsch-Gordon coefficients are band-diagonal. This is even true if the irrep indices are not fixed, since they can only change by $\pm\frac12$. The runtime of adding the $i$th column of tensors to the contraction is thus dominated by contracting the MPS tensors $A_i$, which takes time $\operatorname{O}(\chi^w n^3)$. Since there are $n$ steps, the total runtime is $\operatorname{O}(\chi^w n^4)$.

\section{Alternative combinatorial method for calculation of the entries of $F$}
\label{sec:combinatorial_f}
In Appendix~\ref{sec:tensor_network_f_entries}, we have presented a tensor-network method for calculating the matrix entries $F_{q_\lambda,q_\lambda'}^{\bm{i},\lambda}$ used in the main text to efficiently simulate permutation symmetric Hamiltonians. In this section, we present an alternative method that calculated these matrix elements in polynomial runtime $\operatorname{O}(n^{10})$ using combinatorics. Even though the runtime scaling for this method is slower, it may be more approachable and insightful for people without tensor-network background.

We also like to remark that in this method, the matrix entries are calculated individually, each taking a runtime of $\operatorname{O}(n^4)$. This might help if the given Hamiltonian is sparse in the symmetrized Pauli basis. In this case it suffices to calculate only the matrix elements for the non-zero Pauli monomials. For example, consider a family of permutation invariant Hamiltonians that is $k$-local for a constant $k$. In this case, we only need the matrix elements for a constant number of Pauli monomials and thus in this case (since the Schur basis has dimension $\operatorname{O}(n^3)$) the relevant matrix elements can also be calculated in $\operatorname{O}(n^7)$. In contrast, restricting to a sparse subset of symmetrized Pauli monomials does not directly lead to an improvment of the $\operatorname{O}(n^7)$ runtime of the tensor-network based algorithm.

The combinatorial formula for the matrix elements that we will derive in this appendix is as follows.
\begin{lemma}
Recalling that $\lambda$ is given by a Young diagram, we choose $p_{\lambda0}$ to be the standard Young tableaux for that diagram, with numbers increasing first in the column direction and then in row direction, as shown in Eq.~\eqref{eq:standard_young_tableaux}. Then the tensor components $F_{q_\lambda,q_\lambda'}^{\bm{i},\lambda}$ discussed in the main text for the completely symmetrized Pauli representation are given by:
\begin{equation}
\label{eq:sn_f_formula}
\begin{multlined}
    F_{q_\lambda,q_\lambda'}^{\bm{i},\lambda}=\sum_{\substack{f_{11},f_{xx},f_{yy},f_{zz},\\g_{010},g_{111},g_{0x1},g_{1x0},\\g_{0y1},g_{1y0},g_{0z0},g_{1z1}\\/Eq.~\eqref{eq:f_formula_constraints}}}
    \frac{1}{\sqrt{\binom{n-2\lambda_1}{q_\lambda}{\binom{n-2\lambda_1}{q_\lambda'}}}}
    i^{2f_{xx}+2f_{yy}+2f_{zz}+2g_{1z1}-g_{0y1}+g_{1y0}}\\
    \cdot\frac{\lambda_1!(n-2\lambda_1)!}{f_{11}!f_{xx}!f_{yy}!f_{zz}!g_{010}!g_{111}!g_{0x1}!g_{1x0}!g_{0y1}!g_{1y0}!g_{0z0}!g_{1z1}!}\;,
\end{multlined}
\end{equation}
where the sum is over a set of 12 non-negative integers fulfilling the constraints
\begin{equation}
\label{eq:f_formula_constraints}
    \begin{gathered}
    g_{010}+g_{0z0}+g_{0x1}+g_{0y1} = n-2\lambda_1-q_\lambda,\\
    g_{010}+g_{0z0}+g_{1x0}+g_{1y0} = n-2\lambda_1-q_\lambda',\\
    g_{111}+g_{1z1}+g_{1x0}+g_{1y0} = q_\lambda,\\
    g_{111}+g_{1z1}+g_{0x1}+g_{0y1} = q_\lambda',\\
    2f_{11}+g_{010}+g_{111} = i_1,\\
    2f_{xx}+g_{0x1}+g_{1x0} = i_x,\\
    2f_{yy}+g_{0y1}+g_{1y0} = i_y,\\
    2f_{zz}+g_{0z0}+g_{1z1} = i_z
    \end{gathered}
\end{equation}
and $\lambda_1$ is the length of the second row of $\lambda$.
\end{lemma}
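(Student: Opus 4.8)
The plan is to reduce the evaluation of $F^{\bm i,\lambda}_{q_\lambda,q_\lambda'}=\bra{\lambda,q_\lambda,p_{\lambda0}}A_{\bm i}\ket{\lambda,q_\lambda',p_{\lambda0}}$ to an explicit combinatorial count over computational-basis configurations, entirely parallel to the structure-constant computation of Lemma~\ref{lemma:x_struct_consts}. The crucial simplification is that $A_{\bm i}$ is permutation invariant, so by Schur--Weyl duality it acts within each irrep block $\lambda$ as $M_\lambda\otimes\operatorname{id}$ with respect to the factorization into the $q_\lambda$ register and the $p_\lambda$ register. Consequently the matrix element is independent of the fixed multiplicity vector $p_{\lambda0}$, and it suffices to evaluate $\bra{\psi_{q_\lambda}}A_{\bm i}\ket{\psi_{q_\lambda'}}$ on any convenient orthonormal family $\{\ket{\psi_q}\}_q$ that realizes a single multiplicity copy of $\lambda$ with definite weight $q$.

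First I would make the canonical Schur basis state explicit. For the column-first standard tableau of Eq.~\eqref{eq:standard_young_tableaux}, the column antisymmetrizer in the Young symmetrizer \eqref{eq:young_symmetrizer} acts on the height-two columns, i.e.\ on the qubit pairs $(1,2),(3,4),\dots,(2\lambda_1-1,2\lambda_1)$, while the row symmetrizer symmetrizes the remaining $n-2\lambda_1$ qubits. I would show that this produces, up to normalization, the product of $\lambda_1$ singlets $\tfrac{1}{\sqrt2}(\ket{01}-\ket{10})$ on the first $2\lambda_1$ qubits tensored with the Dicke state $\binom{n-2\lambda_1}{q}^{-1/2}\sum_{|x|=q}\ket{x}$ of weight $q=q_\lambda$ on the remaining qubits. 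Varying $q$ sweeps out an orthonormal basis of the $q_\lambda$ register inside one fixed multiplicity copy—the lowering operator $\sum_j\sigma_-^{(j)}$ is permutation symmetric and connects these Dicke states—so the family is admissible in the sense above and supplies exactly the factor $\binom{n-2\lambda_1}{q_\lambda}^{-1/2}\binom{n-2\lambda_1}{q_\lambda'}^{-1/2}$ appearing in Eq.~\eqref{eq:sn_f_formula}.

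Next I would expand $A_{\bm i}=\sum_{p\in P_{\bm i}}p$ as in the proof of Lemma~\ref{lemma:x_struct_consts} and evaluate $\bra{\psi_{q_\lambda}}\,p\,\ket{\psi_{q_\lambda'}}$ as a product over qubits, exploiting the singlet-times-Dicke tensor structure. On each singlet pair only a matched Pauli $\sigma_a\otimes\sigma_a$ has nonzero expectation, with $\bra{s}\sigma_a\otimes\sigma_a\ket{s}=+1$ for $a=1$ and $-1$ for $a\in\{x,y,z\}$, while every unmatched choice $\sigma_a\otimes\sigma_b$ with $a\neq b$ gives vanishing expectation; this is the origin of the diagonal variables $f_{11},f_{xx},f_{yy},f_{zz}$ and the sign $i^{2f_{xx}+2f_{yy}+2f_{zz}}$. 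On each of the $n-2\lambda_1$ symmetric qubits a single Pauli induces a transition $\bra{x_j}\sigma_b\ket{y_j}$, nonzero only for the eight allowed (bra-bit, Pauli, ket-bit) triples encoded by $g_{010},g_{111},g_{0x1},g_{1x0},g_{0y1},g_{1y0},g_{0z0},g_{1z1}$; reading off $\sigma_z\ket{1}=-\ket{1}$, $\sigma_y\ket{0}=i\ket{1}$, and $\sigma_y\ket{1}=-i\ket{0}$ produces precisely the remaining phase $i^{2g_{1z1}-g_{0y1}+g_{1y0}}$.

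Finally I would carry out the counting. Summing jointly over the Pauli words $p\in P_{\bm i}$ and over the weight-$q_\lambda$ (resp.\ weight-$q_\lambda'$) computational strings of the bra (resp.\ ket) Dicke expansions amounts to counting assignments of pair-types to the $\lambda_1$ singlets, giving the multinomial $\lambda_1!/(f_{11}!f_{xx}!f_{yy}!f_{zz}!)$, together with assignments of triple-types to the $n-2\lambda_1$ symmetric qubits, giving $(n-2\lambda_1)!/\prod_g g!$. Requiring that the total count of each Pauli type match $\bm i$ forces $2f_{aa}+(\text{its }g\text{'s})=i_a$, and matching the bra and ket weights to $q_\lambda,q_\lambda'$ forces the four bit-count conditions; these are exactly the constraints of Eq.~\eqref{eq:f_formula_constraints}, and assembling the normalization, phases, and multinomials reproduces Eq.~\eqref{eq:sn_f_formula}. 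I expect the main obstacle to be the first step: rigorously identifying the canonically symmetrized state with the singlet-times-Dicke product and verifying both its normalization and that it occupies a single fixed multiplicity copy, since the Young symmetrizer \eqref{eq:young_symmetrizer} is an unnormalized, non-orthogonal projector and the clean tensor-product form emerges only after these factors are accounted for.
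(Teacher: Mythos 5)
Your proposal follows essentially the same route as the paper's proof: identifying the fixed-multiplicity Schur basis states $\ket{\lambda,q_\lambda,p_{\lambda 0}}$ as products of $\lambda_1$ singlets with a normalized Dicke state via the Young symmetrizer, expanding $A_{\bm{i}}$ into Pauli words, and evaluating the nonvanishing product overlaps (matched Paulis on singlet pairs giving $\pm 1$, the eight allowed single-qubit bra--Pauli--ket triples) with exactly the same multinomial counting, phases, and constraints. The only difference is cosmetic: you justify restricting to one multiplicity copy by a Schur--Weyl block-structure argument, whereas the paper simply computes the Young-symmetrized states directly and never needs that observation.
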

\begin{proof}
Following Appendix~\ref{app:schur_basis}, we can project onto the space with an $\mathrm{S}_n$ irrep $\lambda$ and a fixed multiplicity label $p_{\lambda0}$ using the Young symmetrizer in Eq.~\eqref{eq:young_symmetrizer}. Acting with the Young symmetrizer on a computational basis state yields a superposition of basis states with the same number of $0$s and $1$s. Let us write $\lambda = (\lambda_0, \lambda_1)$ for the lengths of the first and second row of $\lambda$. Then we see that applying the Young symmetrizer yields $0$ unless the number of $1$s is between $\lambda_1$ and $\lambda_0$. This is because the row symmetrizer does not change the number of $1$s, and the antisymmetrizer on $\lambda_1$ length-$2$ columns yields $0$ if any columns are $00$ or $11$. Thus, the irrep basis states can be obtained by applying the Young symmetrizer to states with $\lambda_1+q_\lambda$ ones, where $0\leq q_\lambda\leq n-2\lambda_1$. Specifically, we can use
\begin{equation}
\ket{\lambda,p_{\lambda0},q_\lambda}
=\Pi_{\lambda:p_{\lambda0}} \ket{x_{q_\lambda}},
\end{equation}
with
\begin{equation}
\ket{x_{q_\lambda}} \coloneqq \ket{01}^{\otimes \lambda_1} \otimes \ket0^{\otimes n-2\lambda_1-q_\lambda} \otimes \ket1^{\otimes q_\lambda}\;.
\end{equation}
Let us first evaluate
\begin{equation}
\label{eq:state_row_symmetrized}
\sum_{r \in \text{Row}(p_{\lambda0})}  R(r) \ket{x_{q_\lambda}} =
\Pi_{r\rightarrow c} \left(\ket{\Sigma_{\lambda_0}^{q_\lambda}}\otimes \ket1^{\otimes \lambda_1}\right),
\end{equation}
where $\ket{\Sigma_x^y}$ denotes the equal-weight superposition of all computation basis states on $x$ qubits with $x-y$ zeros and $y$ ones, which (up to normalization) is also known as \emph{Dicke state} on $x$ qubits \cite{dicke1954coherence,bartschi2019deterministic}. $\Pi_{r\rightarrow c}$ denotes the permutation of qubits needed to obtain the "column-standard" Young tableau $p_{\lambda0}$ from an analogous "row-standard" Young tableau where the numbers first increase in the row direction and then in column direction. In other words, if we think of the qubits being associated to the tiles of the Young diagram $\lambda$, then the qubits in the first row are in state $\ket{\Sigma_{\lambda_0}^{q_\lambda}}$, and the qubits in the second row are in state $\ket1^{\otimes \lambda_1}$.

Next, for a two-row standard Young tableau $p_{\lambda 0}$, we have
\begin{equation}
\sum_{c \in \text{Col}(p_{\lambda0})} \operatorname{sgn}(c) R(c)
= (\operatorname{id}_{\changevthree{2}}-\tau)^{\otimes \lambda_1}\otimes \operatorname{id}_2^{\otimes n-2\lambda_1} = (\ket{\Psi}\bra{\Psi})^{\otimes \lambda_1} \otimes \operatorname{id}_2^{\otimes n-2\lambda_1}\;,
\end{equation}
where $\ket\Psi$ is the 2-qubit singlet state
\begin{equation}
\ket{\Psi} = \frac{1}{\sqrt2}(\ket{01}-\ket{10})\;,
\end{equation}
and $\tau$ denotes the swap operator acting on two qubits. The qubits in the second row of $\lambda$ in the state of Eq.~\eqref{eq:state_row_symmetrized} are fixed to $\ket1$, so applying $\ket{\Psi}\bra{\Psi}$ to each of the first $\lambda_1$ columns has the same effect as applying $\ket{\Psi}\bra{\Psi} (\ket0\bra0\otimes \operatorname{id}_2)$. Applying $\ket0\bra0$ to the first $\lambda_1$ qubits of $\ket{\Sigma_{\lambda_0}^{q_\lambda}}$ yields $\ket0^{\otimes \lambda_1}\otimes \ket{\Sigma_{\lambda_0-\lambda_1}^{q_\lambda}}$. Thus, we find:
\begin{equation}
\ket{\lambda,p_{\lambda0},q_\lambda} = \Pi_{\lambda:p_{\lambda0}} \ket{x_{q_\lambda}}
= \ket{\Psi}^{\otimes \lambda_1} \otimes \ket{\Sigma_{n-2\lambda_1}^{q_\lambda}}.
\end{equation}

Now, we are ready to evaluate
\begin{equation}
\label{eq:combinatorial_sum}
\begin{multlined}
F^{\bm{i},\lambda}_{q_\lambda,q_\lambda'} \coloneqq \bra{\lambda, q_\lambda, p_{\lambda 0}} A_{\bm{i}} \ket{\lambda, q_\lambda', p_{\lambda 0}}\\=
    (\bra{\Psi}^{\otimes \lambda_1} \otimes \bra{\Sigma_{n-2\lambda_1}^{q_\lambda}})
    \left(\sum_{p_{\bm{i}} \in P_{\bm{i}}} p_{\bm{i}}\right)
    \left(\ket{\Psi}^{\otimes \lambda_1} \otimes \ket{\Sigma_{n-2\lambda_1}^{q_\lambda'}}\right)\\
    =
    \frac{1}{\sqrt{\binom{n-2\lambda_1}{q_\lambda}{\binom{n-2\lambda_1}{q_\lambda'}}}}
    \left(\sum_{s\in S^{q_\lambda}_{n-2\lambda_1}} \bra{\Psi}^{\otimes \lambda_1} \otimes \bra{s}\right)
    \left(\sum_{p_{\bm{i}} \in P_{\bm{i}}} p_{\bm{i}}\right)
    \left(\sum_{s'\in S^{q_\lambda'}_{n-2\lambda_1}} \ket{\Psi}^{\otimes \lambda_1} \otimes \ket{s'}\right)
    \;,
\end{multlined}
\end{equation}
where we used $S^x_y$ to denote the set of bitstrings of length $y$ with exactly $x$ ones.
This is a sum over (more than) exponentially many terms. Similarly to the previous Appendix, it can be evaluated efficiently by realizing that many summands have equal value. Thus, we instead sum over the different possible values multiplied with the number of summands with that value, which can be counted using combinatorics. Each summand is an overlap of two product states with a product operator in between. More precisely, we have a product of first $\lambda_1$ two-qubit overlaps, and then $n-2\lambda_1$ single-qubit overlaps.

For each summand in Eq.~\eqref{eq:combinatorial_sum}, let us denote by $L_{ab}$ with $a,b\in \{1,x,y,z\}$ the subset of two-qubit pairs:
\begin{equation}
L_{ab} \coloneqq \{(2l,2l+1): (p_{\bm{i}})_{2l}=\sigma_a, (p_{\bm{i}})_{2l+1}=\sigma_b, 0\leq l<\lambda_1\}\;,
\end{equation}
and let us write $f_{ab}=|L_{ab}|$ for the number of elements in those subsets.
The according overlap
\begin{equation}
\label{eq:2qubit_overlap}
    \bra\Psi (\sigma_a\otimes \sigma_b) \ket\Psi
\end{equation}
is $0$ if $a\neq b$, so we only need to consider subsets where $a= b$. The number of summands for given numbers $f_{aa}$ is the number of decompositions of the first $\lambda_1$ qubit pairs into the four subsets $L_{aa}$ with $a\in \{1,x,y,z\}$, which equals
\begin{equation}
\label{eq:first_qubits_counting}
    \frac{\lambda_1!}{f_{11}!f_{xx}!f_{yy}!f_{zz}!}\;.
\end{equation}
The value which the overlap on the first $\lambda_1$ qubit pairs contributes to each summand only depends on the numbers $f_{aa}$. The overlap in Eq.~\eqref{eq:2qubit_overlap} is given by $1$ if $a=b=1$, and $-1$ if $a=b$ otherwise. Thus, the overall contribution to each summand is
\begin{equation}
\label{eq:first_qubits_value}
    (-1)^{f_{xx}+f_{yy}+f_{zz}}\;.
\end{equation}

Next, let us consider the $n-2\lambda_1$ single-qubit overlaps. For each summand in Eq.~\eqref{eq:combinatorial_sum}, let us denote by $K_{iaj}$ for $i,j\in \{0,1\}$ and $a\in\{1,x,y,z\}$ the subset of the last $n-2\lambda_1$ qubits
\begin{equation}
K_{iaj}\coloneqq \{l: (p_{\bm{i}})_{2\lambda_1+l}=\sigma_a, s_l=i, s'_l=j, 0\leq l<n-2\lambda_1\}\;,
\end{equation}
and let us write $g_{iaj}=|K_{iaj}|$ for the number of elements in those subsets. The according overlap
\begin{equation}
\label{eq:single_qubit_overlap}
    \bra{i} \sigma_a \ket{j}
\end{equation}
is only non-zero if $i=j$ for $a\in \{1,z\}$ and $i\neq j$ for $a\in\{x,y\}$, so we can restrict to summands where only those 8 subsets are non-empty. The number of summands for given numbers $g_{iaj}$ is the number of decompositions of the set of the last $n-2\lambda_1$ qubits into the 8 subsets $K_{iaj}$, and is thus given by
\begin{equation}
\label{eq:second_qubits_counting}
    \frac{(n-2\lambda_1)!}{g_{010}!g_{111}!g_{0x1}!g_{1x0}!g_{0y1}!g_{1y0}!g_{0z0}!g_{1z1}!}\;.
\end{equation}

The contribution of the overlap on the last $n-2\lambda_1$ qubits to each summand only depends on the numbers $g_{iaj}$. The single-qubit overlap in Eq.~\eqref{eq:single_qubit_overlap} evaluates to $1$ for $g_{010}$, $g_{111}$, $g_{0X1}$, $g_{1X0}$ and $g_{0z0}$, $-1$ for $g_{1z1}$, $i$ for $g_{0y1}$, and $-i$ for $g_{1y0}$. Thus the overall contribution to each summand is
\begin{equation}
\label{eq:second_qubits_value}
    (-1)^{g_{1z1}} (-i)^{g_{0y1}} (i)^{g_{1y0}}\;.
\end{equation}
Overall, the number of summands for given $f_{aa}$ and $g_{iaj}$ is the product of Eq.~\eqref{eq:first_qubits_counting} and Eq.~\eqref{eq:second_qubits_counting}, and the value of each summand is given by the product of Eq.~\eqref{eq:first_qubits_value} and Eq.~\eqref{eq:second_qubits_value}. Plugging this into Eq.~\eqref{eq:combinatorial_sum} yields Eq.~\eqref{eq:sn_f_formula}. The constraints in Eq.~\eqref{eq:f_formula_constraints} are explained as follows. The first four constraints are due to the fact that the number of zeros and ones in $s$ and $s'$ is determined by $q_\lambda$ and $q_\lambda'$, respectively. The last four constraints correspond to the fact that the number of Pauli operators $\sigma_1$, $\sigma_x$, $\sigma_y$, $\sigma_z$ in $p_{\bm{i}}$ is given by $i_1$, $i_x$, $i_y$, and $i_z$, respectively.
\end{proof}

In a similar fashion to the previous Appendix, we can easily evaluate the runtime this method achieves in calculating all of the matrix elements. Note that each component is a sum over four independent variables due to the constraints, yielding a runtime of $\operatorname{O}\left(n^4\right)$. Taking into account the $\operatorname{O}\left(n^6\right)$ tensor components of $F$ yields the final runtime of $\operatorname{O}\left(n^{10}\right)$. Once again, it seems likely that the $\operatorname{O}\left(n^4\right)$ runtime for a single tensor component can be reduced to a smaller exponent. We will leave this open to further investigation.

\section{Ground state energy via structure coefficients}
In this appendix, we will give an alternative algorithm for finding the ground state energy only.
The runtime of this algorithm is slower than for the algorithm presented in the main text.
Yet it might help the reader develop further intuition for why the computation is tractable in polynomial time.
It is also simpler in that it does not require any knowledge of the Schur transform at all, but only determining the structure coefficients of the algebra $X$.

\subsection{Algorithm for general symmetry groups}
Let us start by phrasing the alternative ground-state energy algorithm for general symmetry groups and representations.

\begin{lemma}[Finding the ground state energy of symmetric Hamiltonians]
Consider a subalgebra $X$ of dimension $N$, and assume that the structure constants of $X$ in some preferred basis are known. Let $H\in A(X)$ be a Hamiltonian given in the preferred basis as in Eq.~\eqref{eq:hamiltonian_basis}. Then the ground state energy of $H$ can be found in time $\operatorname{O}\left(N^\omega\right)$.\label{lem:general_gse}
\end{lemma}
\begin{proof}
Consider the operator with indices:
\begin{equation}
\hat h^j_k\coloneqq \sum_i h_i X^{i,j}_k\;,
\end{equation}
which is nothing but the regular representation of $h$ for the algebra $X$. Then we have that their ground state energies are equal:
\begin{equation}
    \gse(H) = \gse(\hat h)\;,
\end{equation}
or, in tensor-network notation (c.f. Eq.~\eqref{eq:algebra_tensor}),
\begin{equation}
\begin{tikzpicture}
\atoms{square,lab={t=$\hat h$,p=-90:0.4}}{h/}
\draw (h-l)edge[mark={arr,f,s},ind=$l$]++(180:0.4) (h-r)edge[ind=$m$]++(0:0.4);
\end{tikzpicture}
\coloneqq
\begin{tikzpicture}
\atoms{square}{{h/lab={t=$h$,p=90:0.4},p={0,0.8}}}
\atoms{circ,lab={t=$X$,p=-90:0.4}}{a/}
\draw (a)edge[mark={arr,f,s}](h) (a-l)edge[mark={arr,f,s},ind=$l$]++(180:0.4) (a-r)edge[ind=$m$]++(0:0.4);
\end{tikzpicture}\;.
\end{equation}
This is because the regular representation is faithful, and the ground state energy of an operator is the same in any faithful representation.
Since $X$ has dimension $N$, the ground state energy of $\hat h$ can be found in time $\operatorname{O}(N^\omega)$.
\end{proof}

An advantage of this algorithm is that the only necessary information are the structure constants of $X$; no knowledge of the irreps of $X$ is needed. However, due to this we have poor scaling with the number of irreps $n_\lambda$, as the direct sum structure of $X$ is not necessarily known. Another disadvantage of this approach is that it only gives the ground state energy, rather than the ground state itself (in a representation that is not the regular representation).

\subsection{Algorithm for qubit permutation invariance}
\label{sec:gse_qubit_permutation}
Let us now apply the algorithm in Lemma~\ref{lem:general_gse} to the case of qubit permutation invariance.
\begin{corollary}
The ground state energy of a permutation-symmetric Hamiltonian on $n$ qubits, given as $h_i$ in the basis of symmetrized Pauli monomials above, can be computed in time $\operatorname{O}\left(n^{3\omega}\right)$ via Lemma~\ref{lem:general_gse}.
\end{corollary}
\begin{proof}
All that is needed for applying Lemma~\ref{lem:general_gse} are the structure constants of the algebra $X$, which are computed in the following.
Let us start by showning how these structure coefficients are computed in general, given only the representation $A$.
If we normalize $A$ such that it defines an orthonormal set of operators,
\begin{equation}
\label{eq:a_normalization}
\begin{tikzpicture}
\atoms{square}{{0/lab={t=$A$,p=-90:0.35}},{1/p={0,-1.2},lab={t=$A^*$,p=90:0.35}}}
\draw[nqubit,mark={arr,f,e},mark={arr,f,s},rc] (0)--++(-0.5,0)|-(1);
\draw[nqubit,rc] (0)--++(0.5,0)|-(1);
\draw (0)edge[ind=$a$]++(90:0.5) (1)edge[ind=$b$]++(-90:0.5);
\end{tikzpicture}
=
\begin{tikzpicture}
\draw[ind=$a$,mark={lab=$b$,b}] (0,0)--++(0,0.7);
\end{tikzpicture}
\;,
\end{equation}
we can calculate $X$ via
\begin{equation}
\label{eq:x_from_a}
\begin{tikzpicture}
\atoms{circ}{0/lab={t=$X$,p=-30:0.3}}
\draw (0)edge[mark={arr,f,s},ind=$a$]++(135:0.8) (0)edge[mark={arr,f,s},ind=$b$]++(45:0.8) (0)edge[ind=$c$]++(-90:0.8);
\end{tikzpicture}
=
\begin{tikzpicture}
\atoms{square}{{0/lab={t=$A$,p=-90:0.35}},{1/p={1.6,0},lab={t=$A$,p=-90:0.35}},{2/p={0.8,-0.8},lab={t=$A^*$,p=90:0.35}}}
\draw[nqubit,mark={arr,f,e},mark={arr,f,s},rc] (0)--++(-0.6,0)|-(2);
\draw[nqubit,mark={arr,f,e}] (0)--(1);
\draw[nqubit,rc] (1)--++(0.6,0)|-(2);
\draw (0)edge[ind=$a$]++(90:0.5) (1)edge[ind=$b$]++(90:0.5) (2)edge[ind=$c$]++(-90:0.5);
\end{tikzpicture}\;.
\end{equation}
Here, we use the fact that by definition of $X$, $A$ is a faithful definition of $X$.
Note that the $*$ denotes complex conjugation of the tensor. 
We cannot directly efficiently contract the tensor network on the right-hand side of Eq.~\eqref{eq:x_from_a}, since the thick indices have a bond dimension of $2^n$. This problem can be solved by realizing that $A$ can be written as an MPO, as we have seen in Eq.~\eqref{eq:a_mpo}.
Plugging this MPO representation into Eq.~\eqref{eq:x_from_a}, we obtain
\begin{equation}
\begin{tikzpicture}
\atoms{circ}{0/lab={t=$X$,p=-30:0.3}}
\draw (0)edge[triple,mark={arr,f,s},ind=$\vec i$]++(135:0.8) (0)edge[triple,mark={arr,f,s},ind=$\vec j$]++(45:0.8) (0)edge[triple,ind=$\vec k$]++(-90:0.8);
\end{tikzpicture}
=
\begin{tikzpicture}
\atoms{square,yscale=1.3,lab={t=$+$,p={0,0}}}{00/,10/p={0.7,0},20/p={2.1,0},01/p={0,0.7},11/p={0.7,0.7},21/p={2.1,0.7}}
\atoms{square,yscale=1.3,lab={t=$+$,p={0,0}},lab={t=$*$,p={0.13,0.32}}}{02/p={0,1.4},12/p={0.7,1.4},22/p={2.1,1.4}}
\atoms{circ,tiny,lab={t=$000$,p=180:0.4}}{0x/p={-0.5,0},0y/p={-0.5,0.7},0z/p={-0.5,1.4}}
\draw (20-r)edge[triple,ind=$\vec i$]++(0:0.4) (21-r)edge[triple,ind=$\vec j$]++(0:0.4) (22-r)edge[triple,ind=$\vec k$]++(0:0.4);
\draw (10-r)edge[triple,mark={three dots,a}]++(0:0.3) (11-r)edge[triple,mark={three dots,a}]++(0:0.3) (12-r)edge[triple,mark={three dots,a}]++(0:0.3);
\draw (20-l)edge[triple]++(180:0.3) (21-l)edge[triple]++(180:0.3) (22-l)edge[triple]++(180:0.3);
\draw[triple] (00-r)--(10-l) (01-r)--(11-l) (02-r)--(12-l);
\draw[triple] (0x)--(00-l) (0y)--(01-l) (0z)--(02-l);
\draw[rc] (00-t)--(01-b) (01-t)--(02-b) (02-t)--++(0,0.3)--++(0.3,0)|-($(00-b)+(0,-0.3)$)--(00-b);
\draw[rc] (10-t)--(11-b) (11-t)--(12-b) (12-t)--++(0,0.3)--++(0.3,0)|-($(10-b)+(0,-0.3)$)--(10-b);
\draw[rc] (20-t)--(21-b) (21-t)--(22-b) (22-t)--++(0,0.3)--++(0.3,0)|-($(20-b)+(0,-0.3)$)--(20-b);
\end{tikzpicture}\;,
\end{equation}
where we have vertically shrinked the $+$ tensors and the distance between the horizontal indices, and $*$ denotes complex conjugation. We see that now the tensor network can be contracted efficiently by contracting in the horizontal instead of the vertical direction. Indeed, the overall horizontal bond dimension is $n^9$, so contraction can be performed in polynomial runtime. To further speed up computation, we note that the values of each pair of horizontal indices on the left and right of a $+$ tensor differ by at most $1$. Thus, the tensor $+$ is constant-width band-diagonal in its horizontal indices, independent of the value of its vertical indices. Thus, the matrix corresponding to a column of three $+$ tensors is band-diagonal as well. Applying a band-diagonal matrix to a vector of bond dimension $d$ takes time $\operatorname{O}(d)$. Thus, contracting the above tensor network takes $n$ steps of runtime $\operatorname{O}(n^9)$, so runtime $\operatorname{O}(n^{10})$ in total.
\end{proof}

\subsection{Alternative combinatorial method for computing the structure coefficients of \texorpdfstring{$X$}{X}}
\label{app:x_structure_constants}
In this Appendix we give an alternative combinatorial algorithm for calculating structure constants of the algebra $X$, similar to the algorithm used in Appendix~\ref{sec:combinatorial_f}.
Again, the algorithm is slower than the tensor-network method in Appendix~\ref{sec:gse_qubit_permutation}, namely $\operatorname{O}(n^{15})$.
However, it might have the advantage of being more accessible to readers not familiar with tensor network.
\begin{lemma}
    The structure coefficients $X_{\bm k}^{\bm i, \bm j}$ of the completely symmetrized Pauli representation are given by:
    \begin{equation}
    \label{eq:x_structure_constants_explicit}
        X_{\bm k}^{\bm i, \bm j}=\sum\limits_{\substack{\{f_{ab}\}_{a,b\in\{1,x,y,z\}}\\/\text{Eq.\eqref{eq:f_constraint},Eq.\eqref{eq:k_constraint}}}}
            \frac{k_1!}{f_{11}!f_{xx}!f_{yy}!f_{zz}!}
            \frac{k_x!}{f_{1x}!f_{x1}!f_{yz}!f_{zy}!}
            \frac{k_y!}{f_{1y}!f_{y1}!f_{xz}!f_{zx}!}
            \frac{k_z!}{f_{1z}!f_{z1}!f_{xy}!f_{yx}!}
            (i)^{f_{xy}+f_{yz}+f_{zx}} (-i)^{f_{yx}+f_{xz}+f_{zy}}
    \end{equation}
    where the variables in the sum are non-negative integers subject to the constraints
    \begin{equation}
        \label{eq:f_constraint}
        \sum_{a\in\{1,x,y,z\}} f_{ab} = j_b\;,\qquad \sum_{b\in\{1,x,y,z\}} f_{ab} = i_a,
    \end{equation}
    and
    \begin{equation}
        \begin{aligned}
        f_{11}+f_{xx}+f_{yy}+f_{zz}&\eqqcolon k_1,\\
        f_{1x}+f_{x1}+f_{yz}+f_{zy}&=k_x,\\
        f_{1y}+f_{y1}+f_{xz}+f_{zx}&=k_y,\\
        f_{1z}+f_{z1}+f_{xy}+f_{yx}&=k_z.
        \end{aligned}
        \label{eq:k_constraint}
    \end{equation}
    \label{lemma:x_struct_consts}
\end{lemma}
\begin{proof}
For calculating the structure constants $X$, we first note that
\begin{equation}
A_{\bm j} =
    \frac{1}{i_1!i_x!i_y!i_z!} \sum_{\pi\in \mathrm{S}_n} R(\pi) \left(\sigma_1^{\otimes i_1} \otimes \sigma_x^{\otimes i_x} \otimes \sigma_y^{\otimes i_y} \otimes \sigma_z^{\otimes i_z}\right) R^{-1}(\pi)
    = \sum_{p_{\bm i} \in P_{\bm i}} p_{\bm i}\;,
\end{equation}
where $P_{\bm i}$ is the set of Pauli words with $i_a$ times $\sigma_a$ for $a\in\{1,x,y,z\}$. Now we evaluate the product:
\begin{equation}
\label{eq:x_coefficients_step1}
    A_{\bm i}\cdot A_{\bm j} 
    = \sum_{p_{\bm i} \in P_{\bm i}, p_{\bm j} \in P_{\bm j}} p_{\bm i} p_{\bm j}
    = \sum_{\bm k, p_{\bm k}\in P_{\bm k}}
    \sum_{\substack{p_{\bm i} \in P_{\bm i}, p_{\bm j} \in P_{\bm j}:\\ p_{\bm i} p_{\bm j} = \alpha_{p_{\bm i}, p_{\bm j}, p_{\bm k}}\cdot p_{\bm k}}} \alpha_{p_{\bm i}, p_{\bm j}, p_{\bm k}} \cdot p_{\bm k}.
\end{equation}
This is a sum over products of exponentionally many Pauli words. The idea to evaluate this is that many of the summands have equal value, so it suffices to sum over a few different values multiplied by the number of summands with that value.

    For every summand, define the subsets of qubits $L_{ab}$ for $a,b\in\{1,x,y,z\}$,
    \begin{equation}
    L_{ab}\coloneqq \{l: (p_{\bm i})_l=\sigma_a, (p_{\bm j})_l=\sigma_b, 0\leq l<n\},
    \end{equation}
    and let
    \begin{equation}
    f_{ab}\coloneqq |L_{ab}|
    \end{equation}
    be the numbers of elements in those subsets. Since every Pauli operator $i_l$ at a qubit $l$ is paired with some other Pauli operator $j_l$, $f_{ab}$ fulfill the constraints in Eq.~\eqref{eq:f_constraint}. The multiplication algebra of Pauli operators directly implies Eq.~\eqref{eq:k_constraint}.
    
    Let us now count how many Pauli words there are in the sum for a fixed set of numbers $f_{ab}$ and a fixed resulting Pauli word $p_{\bm k}$. Every triple $p_{\bm i}$, $p_{\bm j}$, $p_{\bm k}$ corresponds to a decomposition of each $\bm k_c$-element set of qubits $\{l:(p_{\bm k})_l=\sigma_c\}$ for $c\in\{1,x,y,z\}$ into four subsets $L_{ab}$ for the four different combinations $a,b\in\{1,x,y,z\}$ with $\sigma_a\sigma_b\propto \sigma_c$ under the Pauli algebra. For each $c$, the number of decompositions into the corresponding four subsets is given by
    \begin{equation}
        \frac{\bm k_c!}{\prod_{a,b: \sigma_a\sigma_b\propto \sigma_c} f_{ab}!}\;.
    \end{equation}
    In total, the number of decompositions into four subset\changevthree{s} for different $c$ is given by
    \begin{equation}
    \label{eq:x_coefficients_counting}
    \frac{k_1!}{f_{11}!f_{xx}!f_{yy}!f_{zz}!}
    \frac{k_x!}{f_{1x}!f_{x1}!f_{yz}!f_{zy}!}
    \frac{k_y!}{f_{1y}!f_{y1}!f_{xz}!f_{zx}!}
    \frac{k_z!}{f_{1z}!f_{z1}!f_{xy}!f_{yx}!}\changevthree{.}
    \end{equation}
    Finally, the prefactor $\alpha_{p_{\bm i}, p_{\bm j}, p_{\bm k}}$ in Eq.~\eqref{eq:x_coefficients_step1} only depends on the $f_{ab}$. Using the Pauli algebra,
    \begin{equation}
    \begin{aligned}
        \sigma_x\sigma_y&=i\sigma_z & \sigma_y\sigma_z&=i\sigma_x & \sigma_z\sigma_x&=i\sigma_y\\
        \sigma_y\sigma_x&=-i\sigma_z & \sigma_z\sigma_y&=-i\sigma_x & \sigma_x\sigma_z&=-i\sigma_y\;,
    \end{aligned}
    \end{equation}
    it is given by
    \begin{equation}
    \label{eq:x_coefficients_value}
    \alpha_{p_{\bm i}, p_{\bm j}, p_{\bm k}} = (i)^{f_{xy}+f_{yz}+f_{zx}} (-i)^{f_{yx}+f_{xz}+f_{zy}}\;.
    \end{equation}
    Using Eq.~\eqref{eq:x_coefficients_counting} and Eq.~\eqref{eq:x_coefficients_value} in Eq.~\eqref{eq:x_coefficients_step1} directly yields Eq.~\eqref{eq:x_structure_constants_explicit}.
\end{proof}

Let us quickly discuss the complexity of the computation of $X_{\bm k}^{\bm i, \bm j}$. In the summation of Eq.~\eqref{eq:x_structure_constants_explicit}, we sum over $16$ variables within a range of the order $n$, so if we naively evaluate the sum, we already obtain a polynomial runtime $\operatorname{O}(n^{16})$. However, due to the constraint Eq.~\eqref{eq:f_constraint}, we can reduce the summation to only $9$ variables $f_{ab}$ with $a,b\in\{x,y,z\}$. Eq.~\eqref{eq:k_constraint} poses another three independent constraints, reducing the summation to $6$ variables. Thus, an individual entry $X_{\bm k}^{\bm i, \bm j}$ can be calculated in $\operatorname{O}(n^6)$ runtime, whereas all $\operatorname{O}(n^9)$ coefficients together take runtime $\operatorname{O}(n^{15})$.

\section{Permutation invariance for qudits}
\label{sec:qudits}
In this appendix, we sketch the generalization of the proposed algorithms to permutation invariant systems with $n$ $d$-dimensional qudits instead of qubits.
In this case, the dimension of the symmetrized basis is $\operatorname{dim}(X) = \binom{n + d^2 - 1}{d^2-1} = \operatorname O\left(n^{d^2-1}\right)$. This can be derived by utilizing a stars and bars counting argument, representing the number of ways to place $d^2-1$ bars among $n$ stars. Let us now determine the numbers $n_\lambda$ and $n_q$ that determine the runtimes of our algorithms, and can also be used to rederive $\operatorname{dim}(X)=\operatorname{O}(n_\lambda n_q^2)$.
The irreducible representations of $SU(d)$ are identified with Young diagrams with $d-1$ rows, which can be labelled by a sequence $\mathbf j=\{j_i\}_{0<i<d-1}$ with $j_i\geq j_{i+1}$ where $j_i$ is the number of boxes in the $i$th row.
The representation acting on a single qudit is the $d$-dimensional fundamental irrep, corresponding to a Young diagram with only one box, which we denote by $\mathbf d$.
The overall representation is $R=\mathbf d^{\otimes n}$
In order to decompose it into irreps, we only need to know the tensor product of an arbitrary irrep $\mathbf j$ with the fundamental irrep $\mathbf d$.
It is given by a direct sum of at most $d+1$ irreps with at most $x+1$ boxes, if $x$ is the number of boxes of $\mathbf j$.
Specifically, the direct sum consists of irreps which are obtained by adding a box to any row, as well as an irrep obtained by removing a box from every row (as long as these operation still yield valid Young diagrams).
So we see that for $n$ qubits, the irreps in the decomposition of $R$ are Young diagrams with at most $n$ boxes.
The number of these Young diagrams is $\operatorname{O}(n^{d-1})$.
The maximum dimension of an irrep is $\operatorname{O}(n^{\frac{d(d-1)}2})$, as can be seen from the asymptotics of the formula
\begin{equation}
    |\lambda| = \left[\prod_{1\leq i < j \leq d} (\lambda_j - \lambda_i +j -i)\right]/ \left[ \prod_{m=1}^d m! \right] 
\end{equation}
given in Ref.~\cite{harrowschur}, with $\lambda_k \approx n(k/d)$.

Let us now look at how the computation of the matrix elements $F$ via Eq.~\eqref{eq:fmatrix_computation} carries over to the case of qudits.
Analogous to Eq.~\eqref{eq:rep_mpo_representation}, the symmetrized Pauli basis for qudits can be represented as an MPO of bond dimension $\operatorname O\left(n^{d^2-1}\right)$.
This MPO is still band-diagonal.
Next, the tensors in the MPO representation in Eq.~\eqref{eq:rep_mpo_representation} are the Clebsch-Gordan coefficients $C_{(\lambda, q_\lambda), (\mathbf d, q_{\mathbf d})}^{(\lambda', q_\lambda')}$, which can be found in Section~18.2.6.\ of Ref.~\cite{Vilenkin1992}.
For an appropriate choice of basis, $C$ band diagonal.
That is, for a fixed $\lambda$ and $q_\lambda$, there is a constant number of $\lambda'$ and $q_\lambda'$ with $C_{(\lambda, q_\lambda), (\mathbf d, q_{\mathbf d})}^{(\lambda', q_\lambda')}\neq 0$.
For the $\lambda$ part, this follows from the fact that tensor product with the fundamental representation  yields at most $d+1$ irrep factors as discussed above.
Overall, the total bond dimension in Eq.~\eqref{eq:fmatrix_computation} is $\operatorname{O}(n_q^2 n^{d^2-1}) = \operatorname{O}(n^{2d^2-d-1})$.
Since the overall matrix (consisting of twice $C$ and the tensor $+$) is band diagonal, a single vector-matrix multiplication in the contracting of Eq.~\eqref{eq:fmatrix_computation} from left to right takes time $\operatorname{O}(n^{2d^2-d-1})$.
Since there are $\operatorname{O}(n_\lambda) = \operatorname{O}(n^{d-1})$ involved irreps, and the contraction consists of $n$ vector-matrix multiplications, the total runtime is $\operatorname{O}(n^{2d^2-1})$.
So we see that we do get a polynomial runtime also for qudits, but the exponents explode quickly with increasing local dimension, for example we have $\operatorname{O}(n^{17})$ for qutrits.

\end{document}